\newtheorem{example}{Example}
\newtheorem{definition}{Definition}
\newtheorem{proposition}{Proposition}
\newtheorem{theorem}{Theorem}
\newtheorem{lemma}{Lemma}
\begin{document}
\newif\ifcomments

\newif\ifnotexclude

\commentstrue
%\notexcludetrue

\raggedbottom

\newcommand{\claudio}[1]{%
  \ifcomments
  \textcolor{red}{\scriptsize{Claudio: #1}}
  \else
  \fi
}
\newcommand{\valdemar}[1]{%
  \ifcomments
  \textcolor{olive}{\scriptsize{Valdemar: #1}}
  \else
  \fi
}

\newcommand{\daniel}[1]{
  \ifcomments
  \textcolor{blue}{\scriptsize{Daniel: #1}}
  \else
  \fi
}

\newcommand{\review}[1]{
  \ifcomments {\color{red} #1} \else #1 \fi
}

\newcommand{\reviewrm}[1]{
  \ifcomments
  {\color{blue}{\sout{#1}}}
  \else
  \fi
}

\title{Precision on Demand: Propositional Logic for Event-Trigger Threshold Regulation}
\author{Valdemar Tang,
Claudio Gomes and
Daniel E. Lucani \thanks{Manuscript created April, 2024; Revised August 2024; Valdemar Tang (valdemar.tang@ece.au.dk), Claudio Gomes and Daniel Lucani are with the Department of Electrical and Computer Engineering, Aarhus University, 8200 Aarhus, Denmark. 

Copyright © 2024 IEEE. Personal use of this material is permitted. However, permission to use this material for any other purposes must be obtained from the IEEE by sending a request to pubs-permissions@ieee.org.
}}
\maketitle              % typeset the header of the contribution
\begin{abstract}
We introduce a novel event-trigger threshold (ETT) regulation mechanism based on the quantitative semantics of propositional logic (PL). We exploit the expressiveness of the PL vocabulary to deliver a precise and flexible specification of ETT regulation based on system requirements and properties. Additionally, we present a modified ETT regulation mechanism that provides formal guarantees for satisfaction/violation detection of arbitrary PL properties. To validate our proposed method, we consider a convoy of vehicles in an adaptive cruise control scenario. In this scenario, the PL operators are used to encode safety properties and the ETTs are regulated accordingly, e.g., if our safety metric is high there can be a higher ETT threshold, while a smaller threshold is used when the system is approaching unsafe conditions. Under ideal ETT regulation conditions in this safety scenario, we show that reductions between 41.8 - 96.3\% in the number of triggered events is possible compared to using a constant ETT while maintaining similar safety conditions.
\end{abstract}
\begin{IEEEkeywords}
Propositional Logic, Event-triggering mechanisms, Interval Arithmetic
\end{IEEEkeywords}

\newcommand{\propstates}[2]{\boldsymbol{\hat{x}_{#2}}}
\newcommand{\propstate}[2]{\hat{x}_{#1, #2}}

% etts
\newcommand{\ett}[4]{\delta_{#1,#2,#3}(#4, \epsilon_{#1, #2})}
\newcommand{\ettfinal}[2]{\delta_{#1, #2}(\propstates{}{}, \allprops)}
\newcommand{\ettpositive}[4]{\delta^+_{#1, #2,#3}(\propstates{#2}{#4}, \epsilon_{#1, #2})}
\newcommand{\ettpositivenoargs}[3]{\delta^+_{#1,#2,#3}(\cdot)}
\newcommand{\ettrefined}[5]{\bar{\delta}_{#1,#2,#3}(\propstates{#2}{#4}, #5)}
\newcommand{\ettrefinednoargs}[3]{\bar{\delta}_{#1, #2, #3}(\cdot)}
\newcommand{\ettin}[5]{\delta_{#1,#2,#3}^{\in}(\propstates{#2}{#4}, \epsilon_{#1, #2}, #5)}
\newcommand{\ettinwc}[5]{\underline{\delta}_{#1,#2,#3}^{\in}(\Delta\propstates{#2}{#4}, \epsilon_{#1, #2}, #5)}
\newcommand{\ettrefinedarb}[5]{\tilde{\delta}_{#1,#2,#3}(\propstates{#2}{#4}, #5)}
\newcommand{\ettrefinedarbnoargs}[3]{\tilde{\delta}_{#1,#2,#3}(\cdot)}
\newcommand{\etttemprefinement}[6]{\theta_{#1,#2,#3}(\propstates{#2}{#4}, \epsilon_{#1, #2}, #5, #6)}
\newcommand{\etttemprefinementmin}[4]{\bar{\theta}_{#1,#2,#3}(\propstates{#2}{#4}, \epsilon_{#1, #2})}
\newcommand{\etttempset}[3]{\Theta_{#1,#2}(#3)}
\newcommand{\ettarb}[4]{\delta^*_{#1,#2,#4}(\propstates{}{t_k}, #3)}
\newcommand{\ettarbwc}[4]{\underline{\delta}^*_{#1,#2,#4}(\Delta\propstates{}{t_{k+1}|t_k}, #3)}
\newcommand{\ettarbwcbetazero}[4]{\underline{\delta}^*_{#1,#2,#4}(\Delta\propstates{}{t_{k+1}|t_k}, 0)}
\newcommand{\ettmax}[2]{\delta^{max}_{#1, #2}}
\newcommand{\ettnoargs}[3]{\delta_{#1,#2,#3}(\cdot)}
\newcommand{\triggerinstant}[2]{\tau^{#1}_{#2}}
\newcommand{\thmett}[2]{\underline{\delta}^+_{#1, #2, t_{k+1}}(#2, \Delta \propstates{}{t_{k+1} | t_k})}
\newcommand{\rhoett}[1]{\rho ETT}
\newcommand{\rhoettwc}[1]{\underline{\rho}ETT}

% robustness
\newcommand{\robustness}[2]{\rho(#1, \propstates{#1}{#2})}
\newcommand{\rosil}[3]{\underline{RoSI}(#1, \propstates{#1}{#2}, #3)}
\newcommand{\relativerobustness}[2]{\zeta_\rho(#1, \propstates{#1}{#2})}
\newcommand{\relativerobustnessnextstep}[2]{\underline{\zeta}_\rho(#1, \Delta\propstates{#1}{#2})}
\newcommand{\relativerobustnessrefined}[2]{\tilde{\zeta}_\rho(#1,\propstates{#1}{#2})}
\newcommand{\relativerobustnessrefinednextstep}[2]{\tilde{\underline{\zeta}}_\rho(#1,\Delta\propstates{#1}{#2})}
\newcommand{\relativerobustnessarb}[2]{\zeta_\rho^{*}(#1, \propstates{#1}{#2})}
\newcommand{\relativerobustnessarbnextstep}[2]{\underline{\zeta}_\rho^{*}(#1, \Delta\propstates{#1}{#2})}
\newcommand{\nextsteprobustnesswc}[2]{\underline{\rho}(#1, \Delta \boldsymbol{\hat{x}_{#2}})}
\newcommand{\nextsteprobustnessbc}[2]{\overline{\rho}(#1, \Delta \boldsymbol{\hat{x}_{#2}})}
\newcommand{\robustnessintervals}[2]{\rho(#1, \Delta \boldsymbol{\hat{x}_{#2}})}
\newcommand{\truerobustness}[2]{\rho(#1, \boldsymbol{x}_{#2})}

% other
\newcommand{\propsignals}[1]{\boldsymbol{y_{\sim #1}}}
\newcommand{\allprops}[0]{\boldsymbol{\varPhi}}

% properties
\newcommand{\satisfactiontime}[2]{t_s(#1, #2)}
\newcommand{\until}[2]{#1\mathcal{U}_{[t_1, t_2]}#2}
\newcommand{\always}[1]{\square_{[t_1, t_2]}#1}
\newcommand{\nextop}[1]{\bigcirc#1}
\newcommand{\alwaysunb}[0]{\square}
\newcommand{\eventually}[1]{\diamond_{[t_1, t_2]}#1}
\newcommand{\eventuallytime}[2]{\diamond_{[#2]}#1}
\newcommand{\inequalityprop}[1]{\varphi^p_{#1}}

% parameters
\newcommand{\epsilonparam}[2]{\epsilon_{#1, #2}}
\newcommand{\lambdaparam}[2]{\lambda_{#1, #2}}
\section{Introduction}

Cyber-physical systems can collect large amounts of sensor data making them costly to operate over time. In such systems, data is often transmitted periodically. This causes a consistent overhead of operating the system even when no new or interesting events are happening. Freeing up communication resources from sensors that do not require high accuracy would also allow the system to increase communication with sensors measuring critical values that require high accuracy. The criticality associated with each sensor changes in time and will depend on specific system properties and requirements. Thus, there is a clear motivation to develop methods to match data traffic to the safety and performance requirements of cyber-physical systems to avoid unnecessary resource consumption, delay or congestion in the communication network. 

\paragraph*{Event-triggering mechanism}
A popular mechanism to implement efficient communication in the state-of-the-art (SOTA) is event-triggered state estimation and control. The fundamental idea is to only sample and/or transmit information when necessary. The necessity is defined by an event-triggering condition that must be fulfilled for the system to make use of the measurement. This can help to significantly reduce communication costs compared to periodic sampling and transmission \cite{chenEventTriggeredStateEstimation2017}. The event-triggering condition typically uses an event-triggering threshold (ETT) which indicates when the difference between new and known information is large enough to be transmitted. Figure \ref{fig:transmission-strategies} visualizes the conceptual differences between a periodic transmission strategy (top), a static threshold (middle) and a safety-dependent threshold (bottom). Intuitively, smaller thresholds (and thus better accuracy) are beneficial in unsafe situations. A similar parallel can be drawn to performance, where underperforming systems require better accuracy and well-performing systems can potentially operate with less sensor accuracy.

\begin{figure}[h]
    \centering
    \includegraphics[width=0.45\textwidth]{./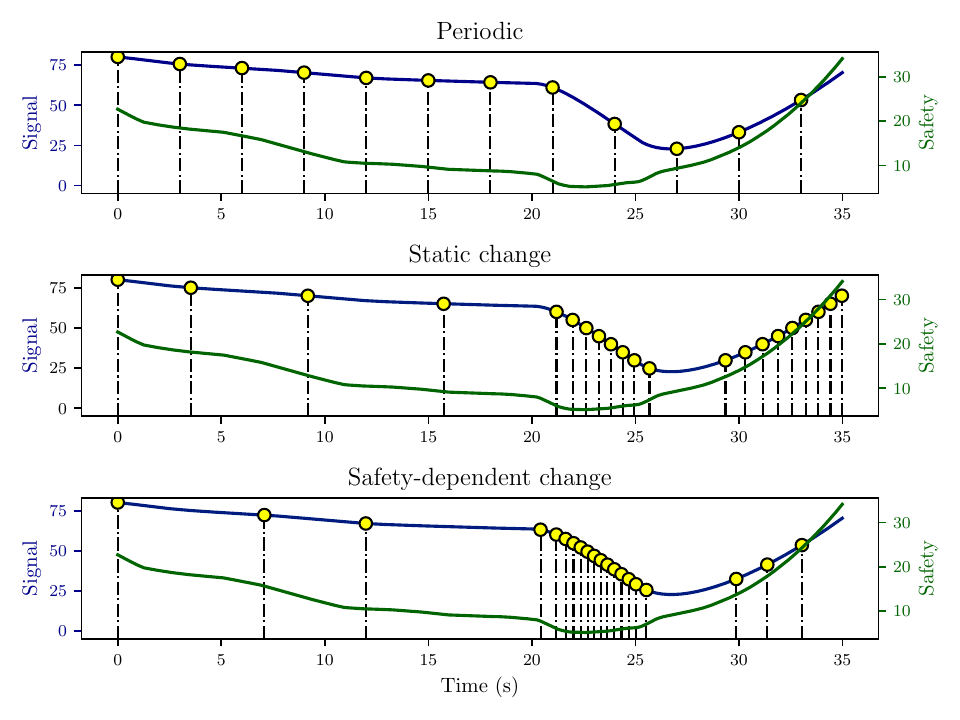}
    \caption{Visualization of periodic transmission (top), static (middle) and safety-dependent (bottom) event-triggering conditions. Yellow dots indicate measurement transmission. The static event trigger prioritizes change in the signal regardless of safety, while the safety-dependent event trigger prioritizes change relative to the safety of the system.}
    \label{fig:transmission-strategies}
\end{figure}
Initially conceptualized in \cite{aarzenSimpleEventbasedPID1999,johanastromComparisonPeriodicEvent1999}, event-triggered techniques have since undergone significant development and have been applied to reduce communication in multi-agent consensus systems \cite{dingOverviewRecentAdvances2018}, networked control systems (NCSs) \cite{zhangNetworkedControlSystems2020a} and distributed state estimation \cite{jiaResourceefficientSecureDistributed2021} among others. As the goals of these systems are different, so are the event-triggering conditions. For example, in NCSs a central focus is on designing a controller and event-triggering mechanism to ensure stability of the controlled systems in the presence of transmission delays, dropouts and other phenomena introduced by the network. As a result, the event-triggering condition is often designed in combination with the controller and the system dynamics. Controller systems may employ state observers (a.k.a. state estimators), which means that using event-triggered techniques requires a corresponding adaptation of those observers. Several well-known techniques have been adapted to event-triggered state estimation, e.g., the Kalman filter \cite{liEventTriggeredKalmanFilter2023}, $H_{\infty}$ filtering \cite{zhangEventbasedFilteringSampleddata2015} and set-membership filtering \cite{weiEventbasedDistributedSetmembership2016}.

\paragraph*{Propositional Logic}
While some properties such as stability and convergence are typically well-defined for a broad class of systems, other safety (and performance) properties require a formal definition on a per-system basis.
Multiple formalisms can be used to achieve this goal. In this paper, we focus on signal-based Propositional Logic (PL) statements \cite{buningPropositionalLogicDeduction1999a} as detailed later in Section \ref{sec:stl}.
PL is fundamentally different from stability analysis in that it provides a way to express and verify system properties based on time series data rather than exhaustively proving system stability.
A PL formula outputs a verdict specifying whether the given property was satisfied or violated. Crucially, it provides a quantitative metric that signifies how much the property is violated or how far it is from being violated. Such properties can be related to performance (liveness) or safety.

\paragraph*{Contribution}
Since the design of the ETT depends on the goals of the system, and since PL provides a well-founded approach to describe and monitor such goals, we propose a method for leveraging the expressiveness of PL to enable precise regulation of the ETT based on system requirements. More specifically, we provide a parameterized ETT regulation mechanism for inequality PL properties and their negations and use the propositional operators and structure of the propositional property to refine the ETT. We provide a method for determining a set of parameters that under certain assumptions guarantee achieving the best possible accuracy before violating a safety or performance property. Additionally, we also provide guidance on parameter tuning for the proposed ETT regulation mechanism and evaluate our approach on a case study concerning a convoy of vehicles in an adaptive cruise control scenario. We compare the numerical results in terms of the number of triggered events for the time-triggered case, a constant ETT and our proposed ETT regulation mechanism. Finally, we explore key tradeoffs for different parameter configurations. 

\paragraph*{Structure}
The considered system architecture, background terminology and problem formulation are presented in Section \ref{sec:background} followed by the proposed method in Section \ref{sec:stl-properties-ett}. This method is applied to a simulated adaptive cruise control (ACC) scenario in Section \ref{sec:case-study}. We discuss the results and limitations of our method in Section \ref{sec:discussion} as well as compare our work with SOTA. Finally, we present concluding remarks and planned future work in Section \ref{sec:conclusion}.
\section{Background}\label{sec:background}
\renewcommand{\arraystretch}{1.3}
\begin{table}[h] 
    \centering 
    \begin{tabular}{|c|p{0.2\textwidth}|} 
        \hline 
        \textbf{Notation} & \textbf{Description} \\ \hline 
        $\boldsymbol{\hat{x}_{t_k}}, \hat{x}_{i,t_k}$ & State estimate vector and state estimate vector element at time $t_k$. \\ \hline
        $\boldsymbol{y}, \propsignals{\inequalityprop{}}$ & Measurable system output vector and signals to regulate ETTs for based on the inequality property $\inequalityprop{}$ respectively. \\ \hline 
        $\varphi, \inequalityprop{}$ & Arbitrary propositional and inequality property respectively \\ \hline
        $\ett{y_i}{\varphi}{t_{k+1}}{t_k}$ & ETT of signal $y_i$ at time $t_k$ defined based on property $\varphi$. \\ \hline 
        $e(\cdot)$ & Update error. Difference between predicted and current or last and current measurement. \\ \hline 
        $\robustness{\varphi}{t_k}$ & Robustness/satisfaction degree of the property $\varphi$ and corresponding system states at time $t_k$. \\ \hline
        $\Delta x, w(\Delta x)$ & Interval and widht of the variable $x$ respectively. \\ \hline
        $\Delta \propstates{}{t_{k+1} | t_k}$ & Vector containing intervals of predicted states at time $t_{k+1}$ based on the states at time $t_k$. \\ \hline
        $\nextsteprobustnesswc{\varphi}{t_{k+1}|t_k}$, $\nextsteprobustnessbc{\varphi}{t_{k+1}|t_k}$ & Lower and upper bound respectively, of the predicted robustness interval of the inequality property $\varphi$ at time $t_{k+1}$ based on the state interval vector. \\ \hline
        $\epsilonparam{y_i}{\varphi}, \lambdaparam{y_i}{\varphi}, \epsilonparam{\rho}{\inequalityprop{}}$ & ETT regulation parameters related to signal $y_i$ and/or inequality property $\inequalityprop{}$. \\ \hline
        $\relativerobustness{\varphi}{t_k}$ & The normalized robustness of the property $\varphi$ at time $t_k$. Variants with different super-scripts exist.  \\ \hline
    \end{tabular} 
    \caption{Overview of common notation used in the paper.} 
    \label{tab:notation} 
\end{table} 

\subsection{The event-triggered system architecture under study}\label{sec:system-architecture}

We consider the system architecture in Fig. \ref{fig:system-architecture} with multiple smart sensors measuring a process. The true state vector of the system at time $t_k$ is denoted by $\boldsymbol{x_{t_k}}$, the control input vector by $\boldsymbol{u_{t_k}}$ and $f(\boldsymbol{x_{t_k}}, \boldsymbol{u_{t_k}})$ is the function that describes how the system evolves from $t_k \rightarrow t_{k+1}$.
Sensor $i$ measures a value $y_{i,t_k}$ at time $t_k$ given by $y_{i, t_k} = w_i(\boldsymbol{x_{t_k}})$. If the event-triggering condition is satisfied, the measurement $y_{i,\tau^i_{m+1}}$ is transmitted to the remote state estimator, where $\tau^i_{m+1}$ identifies the current event-triggering time instant for the $i$th sensor. The state of the process is estimated by the remote state estimator using a model of the process, the previously estimated state $\boldsymbol{\hat{x}_{t_{k-1}}}$, measurements and ETT information from the sensors and monitor $\boldsymbol{\hat{y}_{t_{k-1}}}$ and control input $\boldsymbol{u_{t_{k-1}}}$.  
The system also has a set of properties based on the system requirements $(\boldsymbol{\varPhi})$, which are monitored by the monitor. The monitor receives the $\boldsymbol{\hat{x}_{t_k}}$ and optionally $\boldsymbol{u_{t_k}}$ to determine the state/satisfaction of the monitored properties. Based on the current state of the properties and the measurable system outputs $\boldsymbol{y}$, the monitor calculates suitable ETTs for each smart sensor. The smart sensor sampling and control system update frequency are assumed to be the same and synchronized. Additionally, we consider the case where the system, under the periodic transmission case (i.e. all samples are transmitted) satisfies all system properties.
\begin{figure*}[t]
    \centering
    \includegraphics[width=\textwidth]{./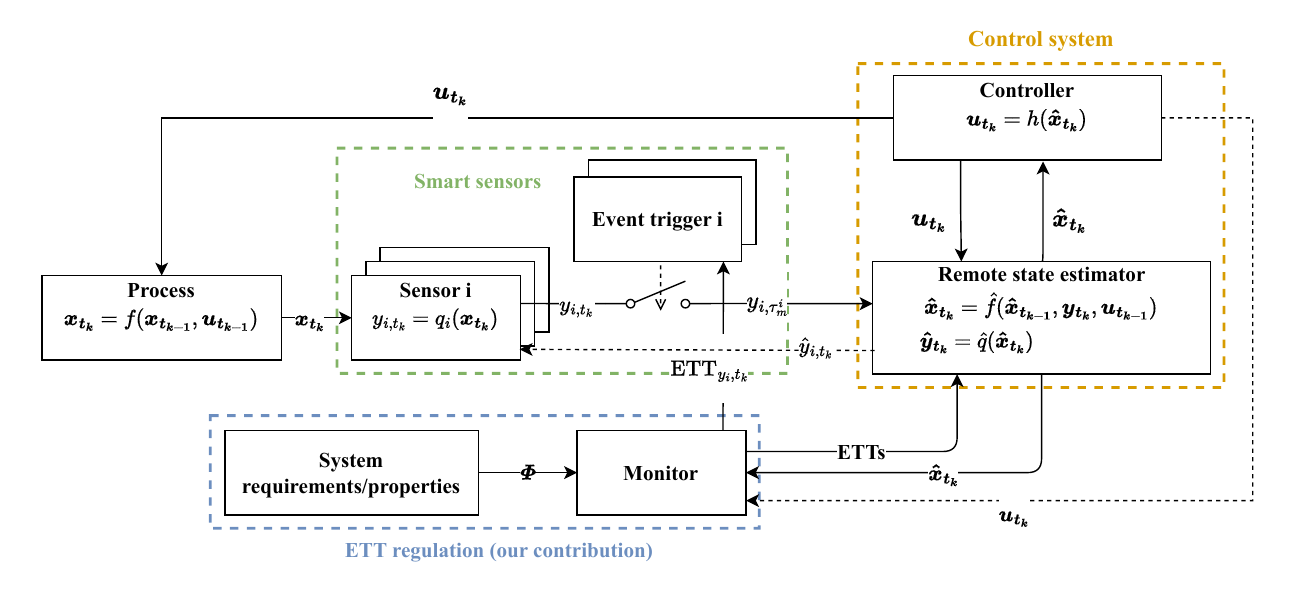}
    \caption{An overview of the system architecture considered in this paper. Dotted black lines indicate an optional data flow. The control input $\boldsymbol{u}_{t_k}$ is necessary for the monitor if we wish to verify or monitor control actions in the system properties and $\hat{y}_{i, t_k}$ is necessary if the innovation update error is used (described in Section \ref{sec:event-triggering-mechanisms}). }
    \label{fig:system-architecture}
\end{figure*}

\subsection{ETT regulation strategies}\label{sec:event-triggering-mechanisms}

The literature distinguishes between static and dynamic ETTs. \textbf{Static ETTs} depend on the current state estimate vector, are time-dependent or are constant, whereas \textbf{dynamic ETTs} have additional dynamic variables. We focus on state-dependent static ETTs and thus do not detail dynamic ETTs further and interested readers are referred to \cite{geDynamicEventTriggeredDistributed2020,geDynamicEventtriggeredControl2021,geDistributedEventTriggeredEstimation2020,jiaResourceefficientSecureDistributed2021,zhangOverviewDeepInvestigation2017}. A general event-triggered transmission mechanism for a state-dependent static ETT can be defined as a sequence of event-triggering times \cite{geDynamicEventtriggeredControl2021} for a sensor $i$:

\begin{equation}\label{eq:static-ett}
    \triggerinstant{i}{m+1} = \inf \lbrace t > \triggerinstant{i}{m} \text{ } | \text{ } g_e(e(\cdot),t, ...) > g_{x}(\boldsymbol{\hat{x}_{t}}, ...)\rbrace,
\end{equation}
where $e(\cdot)$ is the update error and can be defined as the difference between the current measurement and previous transmitted measurement $(y_{i, t_{k}} - y_{i, \tau^i_{m}})$ (also known as the \textbf{Send-on-Delta (SOD)} method \cite{miskowiczSendOnDeltaConceptEventBased2006a}) or the difference between the predicted and actual measurement 
\begin{equation}\label{eq:innovation}
    |\hat{y}_{i, t_{k}} - y_{i, t_{k}}|
\end{equation}
also known as the \textbf{innovation} \cite{chenEventTriggeredStateEstimation2017}.
In the SOTA, state-dependent static ETTs are typically given by $g_e(e(\cdot), \Gamma^{\frac{1}{2}}) = ||\Gamma^{\frac{1}{2}}e(\cdot)||^2$ and $g_x(\boldsymbol{\hat{x}_{t_k}}, \Gamma^{\frac{1}{2}}) = ||\Gamma^{\frac{1}{2}}Z(\boldsymbol{\hat{x}}_{t_k}, ...)||^2$ where $\Gamma$ is a weighting matrix and $Z(\boldsymbol{\hat{x}_{t_k}},...)$ is a suitable function to be determined \cite{geDynamicEventTriggeredDistributed2020}. 
\vspace*{1pt}
\begin{example}[SOD Event-trigger]\label{ex:ett}
    Consider a vehicle in an adaptive cruise control scenario that has an estimate of its position $x^c$ as well as the preceding vehicle $x^{c+1}$ where $x^c, x^{c+1} \in \mathbb{R}$. Intuitively, the further away vehicle $c$ is from vehicle $c+1$, the more uncertainty 
    we can allow. For example, we can define $Z(x^c, x^{c+1}) = \max(x^{c+1} - x^c - 10, 1)$, where the minimum ETT is then obtained once the vehicles are less than 11 meters apart. Now consider the position estimates $\hat{x}^c_{t_k} = 10$, $\hat{x}^{c+1}_{t_k} = 22$, the previous measured distance $y_{\tau^c_m} = 12$ from vehicle $c$ to $c+1$ at time $\tau^c_m$, $\underline{\sigma} = 1$, and the weight $\Gamma^{\frac{1}{2}} = 0.2$. We calculate $Z(\cdot) = 22 - 10 = 12$. A new data sample is now taken at the sensor with the value of $y_{t_k} = 15$. 
    The update error becomes: $e(y_{t_k}, y_{\tau^c_m}) = 15 - 12 = 3$.
    We can now determine if an event is triggered by calculating the event-triggering condition:
    \begin{align*}
        & g_e(e(y_{t_k}, y_{\tau^c_m}), \Gamma^{\frac{1}{2}}) = 0.2 \cdot 3 = 0.6, \\
        & g_x(Z(\hat{x}^c_{t_k}, \hat{x}^{c+1}_{t_k}), \Gamma^{\frac{1}{2}}) = 1 \cdot 0.2 \cdot 2 = 0.4.
    \end{align*}
    As $g_e(\cdot) > g_x(\cdot)$, the event-triggering condition is satisfied and the measurement is transmitted.
\end{example}

A central point in the above example is that the ETT is relative to some function of $\boldsymbol{\hat{x}_{t_k}}$. Thus, $g_x(\boldsymbol{\hat{x}}_{t_k}, ...)$ should be designed such that it produces a small value in situations that require a small ETT. 

Event-triggering conditions can be designed for many different purposes. For example, event-triggered techniques have seen widespread use in multi-agent systems (MASs) \cite{dingOverviewRecentAdvances2018}. In such systems, a key design objective could be to achieve consensus in bounded time, to which time-dependent ETTs have been implemented to reduce the ETT over time as the agents gradually achieve consensus \cite{seybothEventbasedBroadcastingMultiagent2013}. Event-triggering techniques have since been developed for a wide range of system classes and setups from linear MASs \cite{huConsensusLinearMultiAgent2016a} to heterogeneous nonlinear MASs \cite{liObserverBasedEventTriggeredIterative2024}.
Event-triggering mechanisms may also adapt to the state of the network. For example, if the network is congested, we may want to increase the ETT resulting in fewer triggered events to avoid further congestion, e.g., in \cite{geDynamicEventTriggeredScheduling2022}. Alternatively, the event-triggering condition may be designed in combination with a controller to ensure the stability of the controlled system, which is the primary focus in event-triggered networked control systems \cite{zhangNetworkedControlSystems2020}.

\subsection{Event-triggered state estimation (ETSE)}\label{apdx:event-triggered-state-estimation}

The event-triggered transmission scheme requires an adaptation of the classical state estimation techniques for periodic sampling and transmission. This is because the estimator has to handle both point (explicit) and set-valued (implicit) measurements where the latter is the result of the ETT \cite{shiEventBasedStateEstimation2016}. There are two overall approaches for tackling this problem: a deterministic and a stochastic approach, each of which determines how the signals are modeled. A central issue in the stochastic approaches is how to deal with the non-gaussian distribution introduced by the ETT. Therefore, the traditional Kalman filter is not directly applicable. An early attempt was to implement the Kalman filter with intermittent observations \cite{sinopoliKalmanFilteringIntermittent2004a}, where the update step of incorporating the measurement is simply skipped at time instants when measurements are not available. This can however lead to sub-optimal performance compared to approaches that take the knowledge of intermittent observations into account \cite{chenEventTriggeredStateEstimation2017}. Exact or optimal solutions to the event-based filtering problem are available but they require numerically estimating the probability distributions which makes them infeasible to use during runtime \cite{shiEventBasedStateEstimation2016}. Therefore approximate solutions are necessary. For simplicity, we focus on state estimators for discrete-time linear systems of the type:
\begin{equation}\label{eq:lti}
    \begin{split}
        & \boldsymbol{x_{t_{k+1}}} = A\boldsymbol{x_{t_k}} + B\boldsymbol{u_{t_k}} + \boldsymbol{w_{t_k}}, \\
        & \boldsymbol{y_{t_k}} = C \boldsymbol{x_{t_k}} + \boldsymbol{r_{t_k}},
    \end{split}
\end{equation}
where $\boldsymbol{x_{t_k}} \in \mathbb{R}^n$ is the system state vector at time $t_k$, $\boldsymbol{y_{t_k}} \in \mathbb{R}^m$ is the measurable system output at time $t_k$, $\boldsymbol{r_{t_k}} \in \mathbb{R}^m$ and $\boldsymbol{w_{t_k}} \in \mathbb{R}^n$ are the measurement and process noise respectively at time $t_k$ and $A, B$ and $C$ are matrices of appropriate sizes. 

Later in our experiments, we use an event-triggered state estimation method that explicitly utilizes the knowledge of the ETT to update the state estimate. One such method was proposed in \cite{shiEventtriggeredApproachState2014} and was later compared with several other ETSE methods in \cite{chenEventTriggeredStateEstimation2017} where it performed the best overall in terms of state estimation accuracy and resulting communication rate. We compared the method from \cite{shiEventtriggeredApproachState2014} with an adaptation of the Kalman filter originally developed for the SOD method in \cite{suhModifiedKalmanFilter2007} by empirically evaluating the estimation accuracy for the system model, where we found that they produced similar estimation accuracy for the case study scenario. As the modified Kalman filter (KF) from \cite{suhModifiedKalmanFilter2007} is significantly simpler, we use this approach for our experiments.

\begin{definition}[Modified Kalman filter adapted from \cite{suhModifiedKalmanFilter2007}]\label{def:modified-kalman-filter}
    A modified Kalman filter which utilizes the knowledge of the event-triggering threshold is given by the standard Kalman prediction equations: 
    \begin{align}
        & \boldsymbol{\hat{x}_{t_k}^-} = A \boldsymbol{\hat{x}_{t_{k-1}}} + B \boldsymbol{u_{t_{k-1}}}, \\
        & P_{t_k}^- = A P_{t_{k-1}}A^T + Q,
    \end{align}
    a modification of the measurement and measurement covariance matrix $R$:
    \begin{equation}
        \bar{R}_{t_k} = R, \\
    \end{equation}
    then for all signals if measurement of signal $y_i$ is received at time $t_k$:
    \begin{equation}
        \indent y_{i, t_k} = y_{i, t_k} \\
    \end{equation}
    else use the predicted measurement $\hat{y}^-_{i, t_k} = C\boldsymbol{\hat{x}^{-}_{t_k}}$:
    \begin{align}
        y_{i, t_k} = \hat{y}^-_{i, t_k}.
    \end{align}
    
    Then, if a measurement was not received for signal $y_i$, the value of a given measurement is assumed to be equal to the predicted value and have a uniform distribution with variance $\delta_{y_i, t_k}^2/3$ where $\delta_{y_i, t_k}$ is the ETT of signal $y_i$ at time $t_k$ which is then added to the measurement noise covariance matrix at the corresponding entry.
    \begin{equation}
        R_{t_k}(i,i) = R_{t_k}(i,i) + \delta_{y_i, t_k}^2/3. \label{eq:update}
    \end{equation}
    We then proceed with the standard Kalman filter update equations:
    \begin{align}
        & K_{t_k} = P_{t_k}^-C^T(CP_{t_k}^-C^T + \bar{R}_{t_k})^{-1}, \\
        & \boldsymbol{\hat{x}_{t_k}} = \boldsymbol{\hat{x}_{t_k}^-} + K_{t_k}(\boldsymbol{y_{t_k}} - C\boldsymbol{\hat{x}_{t_k}^-}), \\
        & P_{t_k} = (I - K_{t_k}C)P_{t_k}^-,
    \end{align}
    where $R$ is the measurement noise covariance matrix, $Q$ is the process noise covariance matrix and $A, B$ and $C$ are the state-space equations from the system model in Eq. \ref{eq:lti}. 
\end{definition}

\subsection{Propositional Logic}\label{sec:stl}

We consider signal-based PL statements with quantified semantics.
PL is defined by the following recursive syntax
\begin{align}\label{eq:stl-syntax}
   \varphi ::= p(\boldsymbol{\hat{x}}) > c\text{ } |\text{ } \neg \varphi \text{ }|\text{ } \varphi_1 \land \varphi_2 \text{ }|\text{ } \varphi_1 \lor \varphi_2,
\end{align}
where $\boldsymbol{\hat{x}}$ is the state estimate vector, $c$ is a constant,
$\neg$ is the negation operator $\land, \lor$ are the logical \textit{and} and \textit{or} operators respectively. 
The syntax in Eq. \eqref{eq:stl-syntax} can be combined to form the well-known implication operator ($\varphi_1 \rightarrow \varphi_2 \equiv \neg \varphi_1 \lor \varphi_2$) which indicates that $\varphi_1$ implies $\varphi_2$. We introduce the notation $\inequalityprop{}$ to refer to an inequality property as this will become useful later.

We adopt the quantitative metrics (robustness) of propositional properties from Signal Temporal Logic \cite{malerMonitoringTemporalProperties2004} that indicate how robustly a given property is violated/satisfied.

\setlength{\belowdisplayskip}{0pt}
\begin{definition}[Robustness adapted from \cite{malerMonitoringTemporalProperties2004}]\label{def:stl-robustness}
    The robustness $\rho$ of a PL property $\varphi$ based on the system state $\boldsymbol{\hat{x}_{t_k}}$ at time $t_k$, denoted as $\rho(\varphi, \boldsymbol{\hat{x}_{t_k}})$ is defined recursively by:
    \begin{align*}
        \rho(p(\boldsymbol{\hat{x}_{t_k}}) > c) &= p(\boldsymbol{\hat{x}_{t_k}}) - c \\
        \rho(\neg\varphi, \boldsymbol{\hat{x}_{t_k}}) &= -\rho(\varphi, \boldsymbol{\hat{x}_{t_k}}) \\
        \rho(\varphi_1 \land \varphi_2, \boldsymbol{\hat{x}_{t_k}}) & =  \min(\rho(\varphi_1, \boldsymbol{\hat{x}_{t_k}}), \rho(\varphi_2, \boldsymbol{\hat{x}_{t_k}})) \\
        \rho(\varphi_1 \lor \varphi_2, \boldsymbol{\hat{x}_{t_k}})& = \max(\rho(\varphi_1, \boldsymbol{\hat{x}_{t_k}}), \rho(\varphi_2, \boldsymbol{\hat{x}_{t_k}})) \\
    \end{align*}
\end{definition}
\setlength{\belowdisplayskip}{5pt}

\subsection{Interval arithmetic}\label{sec:interval-arithmetic}

Due to the ETT, the values of the system outputs can be expressed as an interval $Y = [\underline{Y}, \overline{Y}]$ in terms of the ETT as \cite   {mooreIntroductionIntervalAnalysis2009a}:
\begin{equation}\label{eq:signal-ett-interval}
    Y(y, \delta_{y}) = [y - \delta_y, y + \delta_y] = [\underline{Y}, \overline{Y}],
\end{equation}
where $\delta_y$ is the ETT of the signal $y$ and $y$ is the previously transmitted value or the estimated value in the case of the SOD update error and innovation-based update error respectively. For the remainder of the paper, we use the notation $\Delta x(\cdot)$ to refer to the interval of $x \in \mathbb{R} $ and $\Delta \boldsymbol{x}(\cdot)$ to refer to a vector containing intervals of $\boldsymbol{x} \in \mathbb{R}^n$. We say that an interval is degenerate if $\underline{Y} = \overline{Y}$, which we assume to be the case when an event is triggered as the true value of the signal is known. 
Furthermore, we introduce the addition of intervals $X$ and $Y$ as \cite{mooreIntroductionIntervalAnalysis2009a}:
$X + Y = [\underline{X} + \underline{Y}, \overline{X} + \overline{Y}]$, the subtraction of intervals $X$ and $Y$ as: $X - Y = [\underline{X} - \overline{Y}, \overline{X} - \underline{Y}]$,
and the width of an interval $Y$ as:
\begin{equation}\label{eq:interval-width}
    w(Y) = \overline{Y} - \underline{Y}
\end{equation}
and the product of two intervals $X$ and $Y$ as
\begin{equation}\label{eq:interval-multiplication}
    X \cdot Y = [\min S, \max S] \text{ where } S = \lbrace {\underline{XY}, \underline{X}\overline{Y}, \overline{X}\underline{Y}, \overline{XY}}\rbrace.
\end{equation}
We now present two results that will be useful later to provide property satisfaction detection guarantees.
\begin{lemma}\label{lem:interval-reversal}
    Multiplying an interval $X = [\underline{X}, \overline{X}]$ where $\underline{X} \neq \overline{X}$ with a scalar $\alpha < 0$, reverses the interval s.t. $\alpha X = [\overline{X}\alpha, \underline{X}\alpha]$.
\end{lemma}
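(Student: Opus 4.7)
The plan is to derive this directly from the definition of interval multiplication in Eq.~\eqref{eq:interval-multiplication} by treating the scalar $\alpha$ as the degenerate interval $[\alpha, \alpha]$, so that $\underline{\alpha} = \overline{\alpha} = \alpha$.

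First I would substitute into the definition of $S$, which gives $S = \lbrace \alpha\underline{X},\, \alpha\overline{X},\, \alpha\underline{X},\, \alpha\overline{X} \rbrace$; this collapses to the two-element set $\lbrace \alpha\underline{X},\, \alpha\overline{X} \rbrace$. Next I would use the hypothesis $\underline{X} \neq \overline{X}$ together with the interval convention $\underline{X} \leq \overline{X}$ (so in fact $\underline{X} < \overline{X}$), and multiply by $\alpha < 0$, which reverses the inequality to obtain $\alpha \overline{X} < \alpha \underline{X}$. Therefore $\min S = \alpha \overline{X} = \overline{X}\alpha$ and $\max S = \alpha \underline{X} = \underline{X}\alpha$, which yields $\alpha X = [\overline{X}\alpha,\, \underline{X}\alpha]$ as claimed.

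There is no real obstacle here beyond bookkeeping on the sign: the only subtlety is recognizing that the strict inequality $\underline{X} < \overline{X}$ (needed to flip $\leq$ to strict $<$ under multiplication by a negative scalar) follows from the stated hypothesis $\underline{X} \neq \overline{X}$ combined with the standing convention that $\underline{X} \leq \overline{X}$ for any interval. The result is really just a specialization of Eq.~\eqref{eq:interval-multiplication} and does not require additional machinery.
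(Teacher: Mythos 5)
Your proposal is correct and follows essentially the same route as the paper's own proof: represent $\alpha$ as the degenerate interval $[\alpha,\alpha]$, note that the four products in $S$ collapse to $\{\alpha\underline{X}, \alpha\overline{X}\}$, and use $\alpha<0$ with $\underline{X}<\overline{X}$ to identify $\min S = \alpha\overline{X}$ and $\max S = \alpha\underline{X}$. No gaps to report.
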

\begin{proof}
    See appendix \ref{apdx:interval-reversal-proof}.
\end{proof}
\begin{lemma}\label{lem:ett-interval-width} The addition or subtraction of two intervals $X(x, \delta_x)$ and $Y(y, \delta_y)$ scaled by scalars $\alpha_X$ and $\alpha_Y$ respectively, where $\alpha_X \in \mathbb{R}, \alpha_Y \in \mathbb{R}$, produces an interval where $w(\alpha_X X(x, \delta_x) \pm \alpha_Y Y(y, \delta_y)) = 2|\alpha_X|\delta_x + 2|\alpha_Y|\delta_y$.
\end{lemma}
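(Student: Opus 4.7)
The plan is to break the claim into two elementary facts about interval width and then combine them. First, I would observe that, by the definition $X(x,\delta_x) = [x-\delta_x, x+\delta_x]$ in Eq.~\eqref{eq:signal-ett-interval}, the width is $w(X(x,\delta_x)) = 2\delta_x$, and similarly $w(Y(y,\delta_y)) = 2\delta_y$. The goal is therefore to show (i) that scaling by $\alpha$ multiplies the width by $|\alpha|$, and (ii) that the width is additive under $\pm$ for intervals.

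For (i), I would do a short case split on the sign of the scalar. If $\alpha \geq 0$, applying the interval-product definition in Eq.~\eqref{eq:interval-multiplication} (with the degenerate interval $[\alpha,\alpha]$ and $X(x,\delta_x)$) yields $\alpha X = [\alpha(x-\delta_x),\, \alpha(x+\delta_x)]$, whose width is $2\alpha\delta_x = 2|\alpha|\delta_x$. If $\alpha < 0$, Lemma~\ref{lem:interval-reversal} reverses the endpoints to $[\alpha(x+\delta_x),\, \alpha(x-\delta_x)]$, whose width is $\alpha(x-\delta_x) - \alpha(x+\delta_x) = -2\alpha\delta_x = 2|\alpha|\delta_x$. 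So in either case $w(\alpha_X X(x,\delta_x)) = 2|\alpha_X|\delta_x$ and analogously $w(\alpha_Y Y(y,\delta_y)) = 2|\alpha_Y|\delta_y$.

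For (ii), given generic intervals $A=[\underline{A},\overline{A}]$ and $B=[\underline{B},\overline{B}]$, the interval addition rule gives $w(A+B) = (\overline{A}+\overline{B}) - (\underline{A}+\underline{B}) = w(A) + w(B)$, and the subtraction rule $A - B = [\underline{A}-\overline{B},\, \overline{A}-\underline{B}]$ (also stated in the excerpt) gives $w(A-B) = (\overline{A}-\underline{B}) - (\underline{A}-\overline{B}) = w(A) + w(B)$. Thus both $+$ and $-$ add widths. Setting $A = \alpha_X X(x,\delta_x)$ and $B = \alpha_Y Y(y,\delta_y)$ and combining with (i) gives
\begin{equation*}
w(\alpha_X X(x,\delta_x) \pm \alpha_Y Y(y,\delta_y)) = 2|\alpha_X|\delta_x + 2|\alpha_Y|\delta_y,
\end{equation*}
which is the claim.

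There is no real obstacle here; the only subtle point is remembering to invoke Lemma~\ref{lem:interval-reversal} so that the scaling case with $\alpha < 0$ is handled correctly (otherwise one might naively write $\alpha X = [\alpha(x-\delta_x), \alpha(x+\delta_x)]$ with a negative-width interval). Keeping the absolute values explicit from the scaling step onward makes the final identity immediate.
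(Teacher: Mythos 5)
Your proof is correct, but it is organized differently from the paper's. The paper proves the identity by a direct endpoint computation for one representative sign pattern ($\alpha_X>0$, $\alpha_Y<0$): it uses Lemma~\ref{lem:interval-reversal} to write out $\alpha_Y Y(y,\delta_y)$ explicitly, adds the two scaled intervals endpoint-wise, and reads off the width, asserting that the remaining three sign combinations follow by the same computation. You instead factor the claim into two general facts about the width functional — $w(\alpha A) = |\alpha|\,w(A)$ (a two-case split on the sign of $\alpha$, again via Lemma~\ref{lem:interval-reversal}) and $w(A \pm B) = w(A) + w(B)$ for arbitrary intervals — and then compose them. Your decomposition buys uniformity: it covers all four sign combinations and both the $+$ and $-$ operations explicitly, with no cases deferred to ``analogous'' reasoning, and the two facts are reusable elsewhere; the paper's route is a single concrete calculation that is shorter on the page for the case it treats. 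One pedantic point that applies to both arguments: Lemma~\ref{lem:interval-reversal} is stated only for non-degenerate intervals ($\underline{X} \neq \overline{X}$), so the case $\delta_x = 0$ (or $\delta_y = 0$) strictly needs the one-line observation that scaling a degenerate interval yields a degenerate interval of width zero, which is consistent with $2|\alpha|\cdot 0$.
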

\begin{proof}
    See appendix \ref{apdx:ett-interval-width-proof}.
\end{proof}

For more information on interval arithmetic, the reader is referred to \cite{mooreIntroductionIntervalAnalysis2009a}.

\subsection{Problem formulation}

We introduce the notation $\propsignals{\varphi}$ where $\propsignals{\varphi} \subseteq \boldsymbol{y}$ to denote the system outputs to which the ETT regulation should be applied based on the property $\varphi$.
The set $\propsignals{\varphi}$ is to be defined by the system designer.

We consider the problem of finding an event-triggering mechanism $ET(\Phi, \propstates{}{}, \Xi)$ with a corresponding set of parameters $\Xi$ to ensure the satisfaction of all system properties $\boldsymbol{\varPhi}$ while transmitting as few measurements as possible. This is stated formally as the following optimization problem:
\begin{equation}\label{eq:general-optimization}
    \begin{split}
        & \underset{ET(\Phi, \propstates{}{}, \Xi)}{\text{argmin}}(\sum_{y_i \in \boldsymbol{y}} |\lbrace y_{i, \tau^i_1}, ..., y_{i, \tau^i_m}\rbrace |) \\
        & \indent \indent \text{s.t. } \forall \varphi \in \boldsymbol{\varPhi} \text{ } \forall t_k \in [t_0, t_{end}] \text{ } \rho(\varphi, \propstates{}{t_k}) > 0, \\
    \end{split}
\end{equation}

Additionally, we assume that the original system under the time-triggered transmission approach (TT) satisfies all system properties. Thus we know that
\begin{equation}\label{eq:ett-feasibility}
    \begin{split}
        & ET(\cdot) = TT(T_s) \rightarrow \\
        & \indent \indent (\rho(\varphi, \propstates{}{t_k}) > 0) \text{ }  \forall t_k \in [t_0, t_{end}] \forall \varphi \in \boldsymbol{\varPhi},
    \end{split}
\end{equation}
where $T_s$ denotes the sampling, transmission and controller update interval.

\section{PL robustness-based ETT regulation}\label{sec:stl-properties-ett}

An ETT based on state information requires the designer to model $g_x(\boldsymbol{\hat{x}_t}, ...)$ from Eq. \eqref{eq:static-ett} so that it outputs a small value in situations with a low desired state estimate uncertainty and vice versa. 
The robustness output of inequality  PL properties provides exactly this. For example, we may allow a larger state-estimate uncertainty (and thus save communication resources) in definitely safe or well-performing states (large robustness) and require a lower uncertainty in unsafe/underperforming states (low robustness) as depicted in Fig. \ref{fig:transmission-strategies}. 

\subsubsection*{Relation with State of the Art (SOTA)} The robustness of inequality PL properties and their negations in Definition \ref{def:stl-robustness} is calculated by subtracting some state-dependent value from a constant. 
This is similar to several implementations of $g_x(\boldsymbol{\hat{x}}, ...)$ in SOTA. 
For example, \cite{geDynamicEventTriggeredScheduling2022} proposes a triggering mechanism that depends on the difference between the steady-state desired distance and the current distance between automatically controlled vehicles in a cooperative cruise-control scenario, which is similar to Example \ref{ex:ett}.

In \cite{huConsensusLinearMultiAgent2016}, the authors study a multi-agent system that attempts to achieve consensus where the triggering condition is relative to a weighted difference of state estimates between agents. The smaller the difference, the smaller the ETT. 
This is analogous to a PL liveness property where we want to verify that the system achieves consensus. This similarity indicates that the quantitative semantics of PL properties is a suitable metric to base our ETT regulation mechanism on.

\subsubsection*{Section overview}Initially we define the baseline constant ETT policy in Section \ref{sec:constant-ett}. Then, in the following sections, we gradually build and extend our ETT regulation mechanism from inequality properties in Section \ref{sec:basic-regulation-mechanism} to include PL properties with inequality sub-properties in Section \ref{sec:ett-regulation-propositional} and then arbitrary PL properties in Section \ref{sec:ett-regulation-arb-propositional}. Along the way, we outline the parameter selection procedure in Section \ref{sec:param-selection-procedure} and provide guarantees for detecting property satisfaction first for inequality properties in Section \ref{sec:parameter-tuning} which we then extend to arbitrary PL properties in Section \ref{sec:arb-prop-detection-guarantees}.

\subsection{Baseline scheme: Constant ETTs (CETT)}\label{sec:constant-ett}

We consider the constant ETT policy CETT as a baseline to compare our proposed method to. 
We adapt Eq. \eqref{eq:general-optimization} where we consider the problem of finding a set of constant ETTs that minimize the total number of triggered events, while still ensuring that the system satisfies all properties. Thus under the CETT scheme $\Xi = \lbrace \delta_{y_0}, ..., \delta_{y_i}\rbrace$ and we add the constraint $\forall y_i \in \boldsymbol{y} \text{ } \delta_{y_i} \geq 0$.
The constant ETTs are chosen such that they can ensure property satisfaction under the worst possible system conditions, and thus do not take advantage of less strict accuracy requirements in less critical situations. Thus, if we required a similar accuracy in the bottom plot of Fig. \ref{fig:transmission-strategies}, we would need to set the ETT correspondingly resulting in significantly more triggered events in less critical situations.

\subsection{A robustness-proportional ETT regulation mechanism for inequality properties}\label{sec:basic-regulation-mechanism}

In order to mitigate the disadvantages of the constant ETT policy, we propose a runtime ETT regulation mechanism based on the robustness of inequality PL properties that can take advantage of less critical situations.

\begin{definition}\label{def:ett-regulation-safety-1}
    A static robustness-proportional ETT regulation mechanism for a signal $y_i \in \propsignals{\varphi}$ for the inequality property $\inequalityprop{}$ of one of the types: $p(\boldsymbol{\hat{x}_{\varphi}}) > c, (p(\boldsymbol{\hat{x}_\varphi}) < c)$ is defined as:
    \begin{equation}\label{eq:ett-regulation-safety-1}
        \ett{y_i}{\varphi}{t_{k+1}}{\propstates{}{t_k}} = \frac{\robustness{\varphi}{t_k}}{\epsilonparam{y_i}{\varphi}}
    \end{equation}
    where $\epsilonparam{y_i}{\varphi} \in \mathbb{R}^+$.
\end{definition}

We can now adapt the optimization problem in Eq. \ref{eq:general-optimization} such that $\Xi$ becomes the set of defined $\epsilonparam{y_i}{\inequalityprop{}}$ parameters and we add the constraint that $\epsilonparam{y_i}{\inequalityprop{}} \in \mathbb{R}^+$.

\begin{proposition}\label{prop:ett}
    Using the ETT regulation mechanism in Definition \ref{def:ett-regulation-safety-1}, we achieve the property
    \begin{equation}
        \lim_{\robustness{\varphi}{t_k} \rightarrow 0} \ett{y_i}{\varphi}{t_{k+1}}{\propstates{}{t_k}} = 0
    \end{equation}\label{eq:ett-property}
    which corresponds to smaller ETTs and thus improved accuracy in more critical situations. 
\end{proposition}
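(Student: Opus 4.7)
The plan is to verify Proposition~\ref{prop:ett} by direct substitution into the definition, since the statement is essentially an elementary limit calculation on the closed-form expression of Definition~\ref{def:ett-regulation-safety-1}. No additional properties of $\rho$ beyond being a real-valued function (so that the limit notation is meaningful) are needed, and the parameter $\epsilonparam{y_i}{\varphi}$ is a constant with respect to the limit variable.

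First, I would substitute the definition of $\ett{y_i}{\varphi}{t_{k+1}}{\propstates{}{t_k}}$ from Equation~\eqref{eq:ett-regulation-safety-1} into the limit expression. Second, since $\epsilonparam{y_i}{\varphi} \in \mathbb{R}^+$ is a strictly positive real constant independent of $\robustness{\varphi}{t_k}$, it can be pulled out of the limit, yielding
\begin{equation*}
    \lim_{\robustness{\varphi}{t_k} \to 0} \frac{\robustness{\varphi}{t_k}}{\epsilonparam{y_i}{\varphi}} = \frac{1}{\epsilonparam{y_i}{\varphi}} \lim_{\robustness{\varphi}{t_k} \to 0} \robustness{\varphi}{t_k} = \frac{0}{\epsilonparam{y_i}{\varphi}} = 0.
\end{equation*}
Third, I would briefly note the interpretation: because $\epsilonparam{y_i}{\varphi}$ is finite and strictly positive, division is well-defined and the ETT scales linearly with robustness, so the critical regime $\robustness{\varphi}{t_k} \to 0$ (approaching property violation) corresponds to vanishing ETT, i.e., maximal accuracy demand, which is exactly the qualitative behavior depicted in Fig.~\ref{fig:transmission-strategies}.

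There is no substantive obstacle in this proof; the only thing worth emphasizing is the role of the constraint $\epsilonparam{y_i}{\varphi} \in \mathbb{R}^+$ from Definition~\ref{def:ett-regulation-safety-1}, which rules out the degenerate $\epsilonparam{y_i}{\varphi} = 0$ case that would otherwise make the expression undefined. Since the proposition is really a sanity check that the proposed mechanism has the desired monotone behavior near property boundaries, I would keep the proof to two or three lines and frame it as a direct consequence of the definition rather than as a theorem requiring substantial argument.
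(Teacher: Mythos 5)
Your proof is correct and follows exactly the reasoning the paper relies on: the paper gives no separate proof for Proposition \ref{prop:ett}, treating it as an immediate consequence of Definition \ref{def:ett-regulation-safety-1}, and your direct substitution with $\epsilonparam{y_i}{\varphi} \in \mathbb{R}^+$ constant is precisely that argument.
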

For the remainder of this Section, we will focus on the ETT regulation mechanism in Definition \ref{def:ett-regulation-safety-1} and the extension of this to arbitrary PL properties. We refer to this policy as $\rho$ETT. Later, in Section \ref{sec:case-study}, we will revisit the TT approach in Eq. \eqref{eq:ett-feasibility} and the CETT policy and compare these approaches to the $\rho$ETT policy presented in this Section. Next, we sketch the overall potential optimization approaches for determining suitable parameters for the TT, CETT and $\rho$ETT policies.

\subsection{Parameter selection procedure outline}\label{sec:param-selection-procedure}

Determining the $\epsilonparam{y_i}{\inequalityprop{}}$ parameters or constant ETTs in Eq. \ref{eq:general-optimization}, can be categorized into two overall approaches; empirical and formal. The former approach entails running a large number of simulations for different parameter sets with random measurement noise and starting conditions. For example, if we consider the optimization problem for static ETTs in Eq. \eqref{eq:general-optimization}, we could start with all ETTs set to 0, and then gradually increase the ETTs until the property is violated. Conversely, if we consider the optimization problem in Eq \eqref{eq:general-optimization} for $\rhoett{}$, we would start with large $\epsilonparam{y_i}{\inequalityprop{}}$ parameters and gradually decrease them until the property is violated. This may be a suitable approach if the system is complex and hard to formally analyze, but may have significant computational requirements. Alternatively, a suitable ETT can be determined by formally analyzing the system and controller dynamics. 
In this paper, we primarily consider the former empirical approach.
In the next section, we show how to obtain a lower bound on the parameters such that when the uncertainty of the measured signals is taken into account then the ETTs will be reduced arbitrarily close to 0 as the system becomes unsafe. 
Later, in Section \ref{sec:case-study} we provide insights on how parameter tuning of both constant ETTs and the proposed ETT regulation method can be performed. We also explore what effects suboptimal parameter configurations can have on the satisfaction of the property and the number of triggered events.

\subsection{Guarantees for inequality property satisfaction detection}\label{sec:parameter-tuning}

Recall that the robustness used to calculate the ETT in Definition \ref{def:ett-regulation-safety-1} is an estimated robustness rather than the actual true robustness. As a result, the property in Proposition \eqref{prop:ett} does not guarantee that the ETTs approach zero when the robustness calculated on the true system state, becomes zero. Because, while the robustness may attain a positive value that is close to zero, because of its associated uncertainty, the real robustness may already have become negative.
According to Eq. \eqref{eq:signal-ett-interval} and as Example \ref{ex:robustness-interval} shows, we allow measurable signals to reside in known intervals. Similarly, the state estimates will also reside in an interval and consequently, the robustness will also reside in an interval which can be calculated using interval arithmetic. This robustness interval is then guaranteed to contain the true robustness. If the robustness interval has a negative lower bound, then we can conclude that the system is potentially unsafe allowing the controller (and the ETT regulation mechanism) to react appropriately. 
\begin{example}[Robustness interval]\label{ex:robustness-interval}
    Consider the states $\propstates{}{} = \begin{bmatrix}
        \hat{x}_1 \\ \hat{x}_2
    \end{bmatrix}$ and the property $\inequalityprop{} = (2 x_1 + 4 x_2) > 9$ and the corresponding robustness $\rho(\inequalityprop{}, \propstates{}{t_k}) = 2 x_1 + 4 x_2 - 9$.
    At time $t_k$, we have the following estimated values of the signals $\hat{x}_{1, t_k} = 3, \hat{x}_{2, t_k} = 1$ corresponding to a robustness of $\rho(\varphi) = 2 \cdot 3 + 4 \cdot 1 - 9 = 1$. We can measure both $x_1$ and $x_2$ and use the ETT regulation mechanism from Definition \ref{def:ett-regulation-safety-1} with the parameters $\epsilon_{x_1} = 3, \epsilon_{x_2} = 3$ providing us with ETTs $\delta_{x_1} = \frac{1}{3}, \delta_{x_2} = \frac{1}{3}$. We assume that the signals $x_1$ and $x_2$ do not change over the next time step and we apply the calculated ETT. Thus the robustness interval at time $t_{k+1}$ becomes $2[3 - \frac{1}{3}, 3 + \frac{1}{3}] + 4[1 - \frac{1}{3}, 1 + \frac{1}{3}] = [10 - \frac{6}{3}, 10 + \frac{6}{3}] = [8, 12]$. As a result, the system could potentially violate $\varphi$ without the remote state estimator or controller being aware which is undesirable.
\end{example}
In the following section, we provide a method for determining the set of minimum $\epsilonparam{y_i}{\inequalityprop{}}$ parameters that ensure that the minimum ETT is always applied before the robustness calculated based on the true system state $\truerobustness{\inequalityprop{}}{t_k}$ reaches zero. 

Let $\rho(\inequalityprop{},\Delta\propstates{}{t_k})$ denote the robustness interval,
where $\Delta\propstates{}{t_k}$ is the state estimate interval vector at time $t_k$. We assume that $\Delta \propstates{}{t_k}$ only contains directly measurable states and that the innovation update error (\cref{eq:innovation}) is used. Thus $\Delta \propstates{}{t_k}$ can be constructed as follows: 
\begin{equation}\label{eq:state-interval}
    \Delta\propstates{}{t_k} = 
    \begin{bmatrix}
        [\propstate{1}{t_k} - \ettnoargs{x_1}{\inequalityprop{}}{t_k}, \propstate{1}{t_k} + \ettnoargs{x_1}{\inequalityprop{}}{t_k}] \\ 
        \vdots \\
        [\propstate{n}{t_k} - \ettnoargs{x_n}{\inequalityprop{}}{t_k}, \propstate{n}{t_k} + \ettnoargs{x_n}{\inequalityprop{}}{t_k}]
    \end{bmatrix}.
\end{equation}

We also assume that it is possible to predict the bounds of the state estimate and consequently the robustness at the next time step $\Delta \boldsymbol{\hat{x}}_{t_{k+1}|t_k}$, given the state estimate information at time step $t_k$:
\begin{equation}\label{eq:predicted-wc-robustness}
    [\nextsteprobustnesswc{\inequalityprop{}}{t_{k+1}|t_k}, \nextsteprobustnessbc{\inequalityprop{}}{t_{k+1}|t_k}] = \rho(\inequalityprop{},\Delta\propstates{}{t_{k+1} | t_k}).
\end{equation}
This prediction can be obtained using set-membership filtering \cite{milaneseOptimalEstimationTheory1991}. Alternatively, one may use a probabilistic state estimation approach such as the KF and use the associated uncertainty related to the state estimate, to construct a confidence interval on the state interval.
In Theorem \ref{thm:parameter-selection}, we present a method for automatically determining the $\epsilonparam{y_i}{\inequalityprop{}}$ parameters from Definition \ref{def:ett-regulation-safety-1} such that the sign of the robustness will always equal the sign of the true robustness, i.e. we guarantee that we detect the satisfaction/violation of a property.

\begin{theorem}\label{thm:parameter-selection}
    Consider an inequality property $\inequalityprop{}$ with the robustness interval width $w(\robustnessintervals{\inequalityprop{}}{t_{k+1}})$ and the robustness of $\inequalityprop{}$ calculated on the true system state $\truerobustness{\inequalityprop{}}{t_{k+1}}$ at time $t_{k+1}$. By setting $\epsilonparam{y_i}{\inequalityprop{}} = \frac{\partial w(\robustnessintervals{\inequalityprop{}}{})}{\partial \ettnoargs{y_i}{\inequalityprop{}}{t_{k+1}}} \lambdaparam{y_i}{\inequalityprop{}}\epsilonparam{\rho}{\inequalityprop{}}$ where $\sum_{y_i \in \propsignals{\inequalityprop{}}} \frac{1}{\lambdaparam{y_i}{\inequalityprop{}}} = 1, \epsilonparam{\rho}{\inequalityprop{}} \geq 1$ and using $\nextsteprobustnesswc{\inequalityprop{}}{t_{k+1}|t_k}$ instead of $\robustness{\inequalityprop{}}{t_k}$ in Definition \ref{def:ett-regulation-safety-1}, we achieve the property
    \begin{equation}\label{eq:robustness-interval-requirements}
        (\robustness{\inequalityprop{}}{t_k} > 0) \Leftrightarrow (\truerobustness{\inequalityprop{}}{t_k} > 0)
    \end{equation} 
    when $\forall x_i \in \propstates{}{} \indent \frac{\partial p(\propstates{}{})}{\partial x_i} = \alpha_i$, the innovation update error is used and $\forall x_i \in \propstates{}{} : \frac{\partial p(\propstates{}{})}{x_i} \neq 0 \text{ } \exists y_i \in \propsignals{\inequalityprop{}} : y_i \equiv x_i$. The latter formalizes that all relevant states are directly measurable. 
\end{theorem}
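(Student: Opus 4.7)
The plan is to exploit the assumed linearity of $p$ to obtain a closed-form width for the robustness interval, then to choose the scaling so that this width shrinks in lockstep with the guaranteed predicted lower bound of the robustness. Because every $\partial p/\partial x_i = \alpha_i$ is constant and every state appearing in $p$ is directly measurable (so its interval has the symmetric form of \eqref{eq:signal-ett-interval}), $\rho(\inequalityprop{}, \propstates{}{t_{k+1}})$ is an affine combination of the $\hat{x}_{i, t_{k+1}}$. Iteratively applying \Cref{lem:ett-interval-width} to this combination yields
\begin{equation*}
    w(\robustnessintervals{\inequalityprop{}}{t_{k+1}}) = 2\sum_{y_i \in \propsignals{\inequalityprop{}}} |\alpha_i|\,\delta_{y_i,\inequalityprop{},t_{k+1}},
\end{equation*}
so $\partial w/\partial \delta_{y_i,\inequalityprop{},t_{k+1}} = 2|\alpha_i|$ and the prescribed parameter reduces to $\epsilonparam{y_i}{\inequalityprop{}} = 2|\alpha_i|\,\lambdaparam{y_i}{\inequalityprop{}}\,\epsilonparam{\rho}{\inequalityprop{}}$.

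I would then substitute this $\epsilonparam{y_i}{\inequalityprop{}}$ into the modified ETT $\delta_{y_i,\inequalityprop{},t_{k+1}} = \nextsteprobustnesswc{\inequalityprop{}}{t_{k+1}|t_k}/\epsilonparam{y_i}{\inequalityprop{}}$, multiply through by $2|\alpha_i|$, sum over $y_i$, and use the normalisation $\sum_i 1/\lambdaparam{y_i}{\inequalityprop{}} = 1$ to collapse the sum to
\begin{equation*}
    w(\robustnessintervals{\inequalityprop{}}{t_{k+1}}) = \frac{\nextsteprobustnesswc{\inequalityprop{}}{t_{k+1}|t_k}}{\epsilonparam{\rho}{\inequalityprop{}}},
\end{equation*}
so the half-width of the post-ETT robustness interval is exactly $\nextsteprobustnesswc{\inequalityprop{}}{t_{k+1}|t_k}/(2\epsilonparam{\rho}{\inequalityprop{}})$. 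This is the key identity that ties the uncertainty of $\rho$ to the quantity used to drive the ETT regulation.

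The biconditional is then obtained by case analysis on the sign of $\nextsteprobustnesswc{\inequalityprop{}}{t_{k+1}|t_k}$. If it is non-positive, the ETT clamps to zero, the sensor transmits the true measurement, and $\robustness{\inequalityprop{}}{t_{k+1}} = \truerobustness{\inequalityprop{}}{t_{k+1}}$, so the two signs trivially coincide. If it is positive, soundness of the set-membership prediction forces both $\truerobustness{\inequalityprop{}}{t_{k+1}}$ and the updated point estimate $\robustness{\inequalityprop{}}{t_{k+1}}$ to sit inside $[\nextsteprobustnesswc{\inequalityprop{}}{t_{k+1}|t_k}, \nextsteprobustnessbc{\inequalityprop{}}{t_{k+1}|t_k}]$, so both are at least $\nextsteprobustnesswc{\inequalityprop{}}{t_{k+1}|t_k} > 0$; combined with the half-width bound $|\robustness{\inequalityprop{}}{t_{k+1}} - \truerobustness{\inequalityprop{}}{t_{k+1}}| \leq \nextsteprobustnesswc{\inequalityprop{}}{t_{k+1}|t_k}/(2\epsilonparam{\rho}{\inequalityprop{}})$, the hypothesis $\epsilonparam{\rho}{\inequalityprop{}} \geq 1$ rules out a sign flip in either direction. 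The main obstacle I anticipate is justifying that the Kalman-updated point estimate indeed stays inside the predicted robustness interval after receiving a subset of measurements; I would handle the no-event branch trivially via $\hat{x}_{t_{k+1}} = \hat{x}_{t_{k+1}}^-$, and for the event branch argue that the triggering condition pins the updated estimate to the true state on the transmitted signals, so the projection onto $\rho$ cannot leave the predicted interval and cannot flip the sign of $\robustness{\inequalityprop{}}{t_{k+1}}$.
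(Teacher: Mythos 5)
Your proposal is correct and follows essentially the same route as the paper: the same width identity $w(\rho(\varphi^p,\Delta\boldsymbol{\hat{x}}_{t_{k+1}})) = 2\sum_i|\alpha_i|\delta_{y_i,\varphi^p,t_{k+1}}$ via Lemma~\ref{lem:ett-interval-width}, the same substitution collapsing to $\max(\underline{\rho}(\cdot),0)/\epsilon_{\rho,\varphi^p}$, and the same sign-coincidence argument from containment of both the estimated and true robustness in the predicted interval (your case split on the sign of $\underline{\rho}$ is just a reorganization of the paper's argument, and your extra half-width/$\epsilon_{\rho,\varphi^p}\geq 1$ step in the positive branch is redundant since both values already exceed $\underline{\rho} > 0$).
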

\begin{proof}
    To satisfy the condition in \cref{eq:robustness-interval-requirements} we predict the robustness interval at time $t_{k+1}$ to determine the necessary ETTs and thus signal intervals at time $t_{k+1}$. 
    As the true robustness $\truerobustness{\inequalityprop{}}{t_{k+1}} \in \robustnessintervals{\inequalityprop{}}{t_{k+1} | t_k}$, we consider the worst-case scenario $(\truerobustness{\inequalityprop{}}{t_{k+1}} = \nextsteprobustnesswc{\inequalityprop{}}{t_{k+1}|t_k})$. Next, we demonstrate that by setting $\epsilonparam{y_i}{\inequalityprop{}} = \frac{\partial w(\robustnessintervals{\inequalityprop{}}{})}{\partial \ettnoargs{y_i}{\inequalityprop{}}{t_{k+1}}} \lambdaparam{y_i}{\inequalityprop{}}\epsilonparam{\rho}{\inequalityprop{}}$ we have:
    \begin{equation}\label{eq:robustness-interval-tuning}
        w(\robustnessintervals{\inequalityprop{}}{t_{k+1}}) \leq \max(\frac{\nextsteprobustnesswc{\inequalityprop{}}{t_{k+1}|t_k}}{\epsilonparam{\rho}{\inequalityprop{}}}, 0).
    \end{equation}
    As the robustness is a linear combination of the bounded scaled state estimates and we use the innovation update error, the upper bound of the robustness interval width (the case when no events are triggered) can be calculated using the state estimate intervals $\Delta \propstates{}{t_{k+1}}$ from \cref{eq:state-interval}. Using the result of Lemma \ref{lem:ett-interval-width}, we have $w(\robustnessintervals{\inequalityprop{}}{t_{k+1}}) = 2|\alpha_1| \ettnoargs{y_1}{\inequalityprop{}}{t_{k+1}} + ... + 2 |\alpha_n| \ettnoargs{y_n}{\inequalityprop{}}{t_{k+1}}$ and thus also $\frac{\partial w(\robustnessintervals{\inequalityprop{}}{})}{\partial \ettnoargs{y_i}{\inequalityprop{}}{t_{k+1}}} = 2 |\alpha_i|$ where $\alpha_i = \frac{\partial \robustness{\varphi}{t_{k+1}}}{y_i}$.
    If we insert $\epsilon_{y_{i}, \inequalityprop{}} = \frac{\partial w(\robustnessintervals{\inequalityprop{}}{})}{\partial \ettnoargs{y_i}{\inequalityprop{}}{t_{k+1}}} \lambdaparam{y_i}{\inequalityprop{}} \epsilonparam{\rho}{\inequalityprop{}}$ in Definition \ref{def:ett-regulation-safety-1} and use $\nextsteprobustnesswc{\inequalityprop{}}{t_{k+1}|t_k}$ instead of $\robustness{\varphi}{t_k}$ we have:
    \begin{align*}
        &w(\robustnessintervals{\inequalityprop{}}{t_{k+1}}) \\
        & \indent \indent = \sum_{y_i \in \propsignals{\inequalityprop{}}} 2|\alpha_i| \frac{\max(\nextsteprobustnesswc{\inequalityprop{}}{t_{k+1}|t_k}, 0)}{2|\alpha_i| \lambdaparam{y_i}{\inequalityprop{}} \epsilonparam{\rho}{\inequalityprop{}}}, \\
        & \indent \indent = \frac{1}{\epsilonparam{\rho}{\inequalityprop{}}} \sum_{y_i \in \propsignals{\inequalityprop{}}} \frac{\max(\nextsteprobustnesswc{\inequalityprop{}}{t_{k+1}|t_k}, 0)}{\lambdaparam{y_i}{\inequalityprop{}}}  \\
        & \indent \indent = \frac{\max(\nextsteprobustnesswc{\inequalityprop{}}{t_{k+1}|t_k}, 0)}{\epsilonparam{\rho}{\inequalityprop{}}}.
    \end{align*}
    Next, assume $\robustness{\inequalityprop{}}{t_{k+1}} \leq 0$. Since $\robustness{\inequalityprop{}}{t_{k+1}} \in \robustnessintervals{\inequalityprop{}}{t_{k+1}}$ and $\robustnessintervals{\inequalityprop{}}{t_{k+1}} \subseteq \robustnessintervals{\inequalityprop{}}{t_{k+1} | t_k}$, and therefore $\nextsteprobustnesswc{\inequalityprop{}}{t_{k+1}|t_k} \leq \robustness{\inequalityprop{}}{t_{k+1}} \leq 0$. Consequently $$(w(\robustnessintervals{\inequalityprop{}}{t_{k+1}}) = 0) \Rightarrow (\robustness{\inequalityprop{}}{t_{k+1}} = \truerobustness{\inequalityprop{}}{t_{k+1}})$$ and thus $(\robustness{\inequalityprop{}}{t_{k+1}} \leq 0) \Rightarrow (\truerobustness{\inequalityprop{}}{t_{k+1}} \leq 0)$. Following the same line of arguments, we can obtain $(\truerobustness{\inequalityprop{}}{t_{k+1}} \leq 0) \Rightarrow (\robustness{\inequalityprop{}}{t_{k+1}} \leq 0)$. 
    If $(\robustness{\inequalityprop{}}{t_k} \leq 0) \Leftrightarrow (\truerobustness{\inequalityprop{}}{t_k} \leq 0)$, then $(\robustness{\inequalityprop{}}{t_k} > 0) \Leftrightarrow ((\truerobustness{\inequalityprop{}}{t_k}) > 0)$ which concludes the proof.
\end{proof}

Do note that the ETT regulation method in Theorem \ref{thm:parameter-selection} does not solve the optimization problem in Eq. \eqref{eq:general-optimization}, but rather it provides a method to determine a sufficiently small ETT such that the system is always aware of whether or not it is potentially violating a property and can act accordingly. The $\epsilonparam{\rho}{\inequalityprop{}}$ parameter in Theorem \ref{thm:parameter-selection}, determines how early the robustness interval starts converging. Larger values of $\epsilonparam{\rho}{\inequalityprop{}}$ correspond to generally lower ETTs and vice versa. We can then adapt the optimization problem in \ref{eq:general-optimization} by setting $\Xi$ to the set of defined $\lambdaparam{y_i}{\inequalityprop{}}$ and $\epsilonparam{\rho}{\inequalityprop{}}$ parameters and adding the constraints in Theorem \ref{thm:parameter-selection}.
We provide a default way to set the $\lambdaparam{y_i}{\inequalityprop{}}$ weights from Theorem \ref{thm:parameter-selection}, in Proposition \ref{prop:robustness-interval-contribution-weight}.

\begin{proposition}\label{prop:robustness-interval-contribution-weight}
    Consider the $\lambdaparam{y_i}{\inequalityprop{}}$ parameters in Theorem \ref{thm:parameter-selection} where we require $\sum_{y_i \in \boldsymbol{y_{\sim\hat{x}_\varphi}}} \frac{1}{\lambdaparam{y_i}{\inequalityprop{}}} = 1$, and $\dim(\lbrace x_i \in \propstates{}{} : \frac{\partial p(\propstates{}{})}{x_i} \neq 0) \rbrace = \dim(\propsignals{\inequalityprop{}})$ (a consequence of the relevant states being directly measurable) where $\dim(\boldsymbol{F})$ is the number of elements in the set $\boldsymbol{F}$. If we set $\lambdaparam{y_i}{\inequalityprop{}} = \dim(\propsignals{\inequalityprop{}}) \text{ } \forall y_i \in \propsignals{\varphi}$ then $\sum_{y_i \in \propsignals{\inequalityprop{}}}\frac{1}{\dim(\propsignals{\inequalityprop{}})} = 1$.
\end{proposition}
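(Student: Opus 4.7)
The plan is a direct substitution followed by a counting argument. First I would observe that the proposed assignment $\lambdaparam{y_i}{\inequalityprop{}} = \dim(\propsignals{\inequalityprop{}})$ makes $\lambdaparam{y_i}{\inequalityprop{}}$ independent of the summation index $y_i$, so it may be treated as a constant with respect to the sum. Substituting into the left-hand side of the normalization condition required by Theorem \ref{thm:parameter-selection} gives
$$\sum_{y_i \in \propsignals{\inequalityprop{}}} \frac{1}{\lambdaparam{y_i}{\inequalityprop{}}} = \sum_{y_i \in \propsignals{\inequalityprop{}}} \frac{1}{\dim(\propsignals{\inequalityprop{}})}.$$

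Next I would invoke the definition of $\dim(\cdot)$ stated in the proposition: $\dim(\propsignals{\inequalityprop{}})$ equals the number of elements of $\propsignals{\inequalityprop{}}$, which is precisely the number of terms in the sum. Since the summand is constant, the sum collapses to $\dim(\propsignals{\inequalityprop{}}) \cdot \frac{1}{\dim(\propsignals{\inequalityprop{}})} = 1$, which is exactly the identity to be proved.

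The dimensionality hypothesis $\dim(\lbrace x_i \in \propstates{}{} : \partial p(\propstates{}{})/\partial x_i \neq 0 \rbrace) = \dim(\propsignals{\inequalityprop{}})$ plays only a supporting role here: together with the assumption in Theorem \ref{thm:parameter-selection} that every relevant state is directly measurable, it guarantees that $\propsignals{\inequalityprop{}}$ is non-empty whenever $p$ depends on at least one state, so that the reciprocal $1/\dim(\propsignals{\inequalityprop{}})$ is well defined and the assignment yields $\lambdaparam{y_i}{\inequalityprop{}} \in \mathbb{R}^+$ as required. There is no real analytical obstacle; the proposition is essentially a bookkeeping statement confirming that the uniform weighting is a valid default choice. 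The only point requiring care is to explicitly note that the collapsed sum has cardinality $\dim(\propsignals{\inequalityprop{}})$, which the stated definition of $\dim$ makes immediate.
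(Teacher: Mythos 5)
Your proof is correct and matches the paper's treatment: the paper gives no separate argument for this proposition because, exactly as you observe, it is an immediate bookkeeping identity — a sum of $\dim(\propsignals{\inequalityprop{}})$ identical terms $1/\dim(\propsignals{\inequalityprop{}})$ collapses to $1$. Your additional remark that the measurability/dimension hypothesis only ensures the set is non-empty (so the reciprocal is well defined) is a reasonable clarification consistent with the paper's intent.
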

Next, we briefly discuss a few remarks regarding some of the assumptions related to Theorem \ref{thm:parameter-selection} followed by how to handle the case when multiple inequality properties depend on one or more of the same signals.

\subsubsection{SOD update error and non-linearities}

Instead of the innovation-based update error, the SOD update error could also be used to bound the state used to calculate the robustness by using the interval determined by $\Delta\hat{x}_{i,t_k} = [y_{i, \tau^i_m} - \delta_{y_i, \inequalityprop{}, t_k}, y_{i, \tau^i_m} + \delta_{y_i, \inequalityprop{}, t_k}]$ where $y_{i, \tau^i_m}$ is the value of $y_i$ at time the previous event-triggering time $\tau^i_m$ for the $i^{th}$ sensor.

Additionally, if the function $p(\boldsymbol{\hat{x}})$ contains non-linear combinations of any elements of $\boldsymbol{\hat{x}}$, the guarantees that Theorem \ref{thm:parameter-selection} provide become more difficult to achieve as we need to consider both the values that can be measured (bounded by Eq. \eqref{eq:predicted-wc-robustness}), as well as the state estimate intervals in the case that events are not triggered. However, we still note that it is likely beneficial to consider the partial derivative of the ETT w.r.t to the robustness interval width when choosing the $\epsilon_{y_i, \inequalityprop{}}$ parameters in such cases as this gives a measure of how ``important'' a given signal is. 

\subsubsection{Multiple properties with overlapping signal sets} 
We now consider the case where multiple inequality PL properties are present which depend on overlapping signal sets. 

We introduce the notation $\ettfinal{y_i}{t_k}$ to denote the final applied ETT to the signal $y_i$ at time $t_k$ based on all system properties $\allprops$. We rely on the assumption that if 
\begin{equation*}
    \begin{split}
        & \forall y_i \in \propsignals{\varphi_b}, \forall t_k \in [t_0, t_{end}] \text{ } \ettfinal{y_i}{t_k} = \ettnoargs{y_i}{\varphi_{b}}{t_k} \\
        & \indent \indent \rightarrow \forall t_k \in [t_0, t_{end}] \robustness{\varphi_b}{t_k} > 0,
    \end{split}
\end{equation*}
then
\begin{equation}\label{eq:lower-ett-still-implies-satisfaction}
    \begin{split}
        & \forall y_i \in \propsignals{\varphi_b}, \forall t_k \in [t_0, t_{end}], \\
        & \indent \forall \ettfinal{y_i}{t_k} \in \lbrace \ettnoargs{y_i}{\varphi_{a}}{t_k} : \ettnoargs{y_i}{\varphi_a}{t_k} \leq \ettnoargs{y_i}{\varphi_{b}}{t_k} \rbrace \\
        & \indent \indent \rightarrow \forall t_k \in [t_0, t_{end}] \robustness{\varphi_b}{t_k} > 0,
    \end{split}
\end{equation}
meaning that an $\ettnoargs{y_i}{\varphi_a}{t_k}$ for a signal $y_i$ defined based on the property $\varphi_{a}$, resulting in a smaller ETT than $\ettnoargs{y_i}{\varphi_{b}}{t_k}$ defined based on another property $\varphi_{b}$, when applied to signal $y_i$, still guarantees the satisfaction of the other property $\varphi_{b}$.
\begin{definition}\label{def:ett-regulation-overalpping-sets}
    For a set of PL properties $\allprops$ with either constant ETTs or ETTs calculated using Definition \ref{def:ett-regulation-safety-1} with overlapping state estimate dependencies, the final ETT of a signal $y_i \in \boldsymbol{y}$ is given by:
    \begin{align}\label{eq:overlapping-sets}
        &  \ettfinal{y_i}{t_k} = \min_{\varphi_j \in \varPhi} \ettnoargs{y_i}{\varphi_j}{t_k} \text{ s.t. } y_i \in \propsignals{\varphi_j}.
    \end{align}
    Eq. \eqref{eq:overlapping-sets} is to be applied after any of the previous or following ETT regulation mechanisms have been applied if multiple separate properties are monitored.
\end{definition}
\begin{lemma}\label{lem:min-ett}
    Definition \ref{def:ett-regulation-overalpping-sets} preserves Theorem \ref{thm:parameter-selection}, if Theorem \ref{thm:parameter-selection} is used to regulate the ETT of the underlying inequality properties and the inequality properties satisfy the conditions in Theorem \ref{thm:parameter-selection}.
\end{lemma}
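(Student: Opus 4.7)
The plan is to show that replacing each property-specific ETT $\ettnoargs{y_i}{\varphi}{t_k}$ by the smaller final value $\ettfinal{y_i}{t_k}$ mandated by Definition \ref{def:ett-regulation-overalpping-sets} only tightens the robustness interval of every inequality property, and therefore leaves intact the key inequality that drives Theorem \ref{thm:parameter-selection}.

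First, I would fix an arbitrary inequality property $\varphi \in \allprops$ satisfying the hypotheses of Theorem \ref{thm:parameter-selection} and observe that, by construction, $\ettfinal{y_i}{t_k} \leq \ettnoargs{y_i}{\varphi}{t_k}$ for every $y_i \in \propsignals{\varphi}$ at every time, because $\ettfinal{y_i}{t_k}$ is a pointwise minimum over a set of candidates that includes $\ettnoargs{y_i}{\varphi}{t_k}$.

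Second, I would reuse the explicit decomposition developed inside the proof of Theorem \ref{thm:parameter-selection}. Under the standing assumptions (linearity of $p$, measurability of every relevant state, and the innovation update error), Lemma \ref{lem:ett-interval-width} writes $w(\robustnessintervals{\varphi}{t_{k+1}})$ as a non-negative linear combination of the per-signal ETTs actually applied to the signals in $\propsignals{\varphi}$. Substituting the tighter $\ettfinal{y_i}{t_{k+1}}$ for $\ettnoargs{y_i}{\varphi}{t_{k+1}}$ can therefore only decrease this sum, so the central inequality
\[
w(\robustnessintervals{\varphi}{t_{k+1}}) \leq \max\!\left(\frac{\nextsteprobustnesswc{\varphi}{t_{k+1}|t_k}}{\epsilonparam{\rho}{\varphi}}, 0\right)
\]
established in the proof of Theorem \ref{thm:parameter-selection} continues to hold under the min rule.

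Third, this inequality is exactly the lever used at the end of that proof to conclude $(\robustness{\varphi}{t_k} > 0) \Leftrightarrow (\truerobustness{\varphi}{t_k} > 0)$, so the concluding argument transfers verbatim, and because $\varphi$ was arbitrary the guarantee is preserved for every qualifying inequality property in $\allprops$. The main obstacle I anticipate is arguing that tightening the ETTs does not invalidate the predicted worst-case robustness $\nextsteprobustnesswc{\varphi}{t_{k+1}|t_k}$ used to size $\ettnoargs{y_i}{\varphi}{t_{k+1}}$; but because this prediction depends only on state information available at $t_k$, and shrinking the ETTs applied at $t_{k+1}$ only makes the realized state and robustness intervals subsets of the predicted ones, no circularity arises and the bound remains valid. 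The rest is pure monotonicity on top of Theorem \ref{thm:parameter-selection}.
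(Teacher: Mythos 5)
Your proposal is correct and takes essentially the same route as the paper: both rest on Lemma~\ref{lem:ett-interval-width} to note that $w(\robustnessintervals{\inequalityprop{}}{t_{k+1}})$ is monotone in the applied ETTs, so the pointwise minimum of Definition~\ref{def:ett-regulation-overalpping-sets} only shrinks the interval width and preserves the inequality driving Theorem~\ref{thm:parameter-selection}. Your write-up simply spells out the monotonicity and the non-circularity of the predicted bound more explicitly than the paper's one-line proof.
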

\begin{proof}
    Using the result of Lemma \ref{lem:ett-interval-width}, we know that $\frac{\partial w(\robustnessintervals{\varphi}{})}{\partial \ettnoargs{y_i}{\varphi}{t_k}} = 2 |\alpha_i|$, meaning that if we decrease the ETT, then $w(\robustnessintervals{\varphi}{t_k})$ decreases aswell which preserves the inequality.  
\end{proof}

\subsection{ETT regulation for PL properties with inequality sub-properties}\label{sec:ett-regulation-propositional}

We now leverage the satisfaction relation of the remaining PL operators to define an ETT regulation mechanism that refines the ETT determined by the underlying inequality property, based on the propositional operators. For the remainder of the paper, we use the shorthand notation $\rhoett{\cdot}$ to denote the application of the robustness-relative ETT regulation policy based on Definition \ref{def:ett-regulation-safety-1} and $\rhoettwc{\cdot}$ to denote the ETT regulation policy based on Theorem \ref{thm:parameter-selection}. In this Section, we consider the $\rhoett{\cdot}$ policy, and later in Section \ref{sec:arb-prop-detection-guarantees} we extend the $\rhoettwc{\cdot}$ from Theorem \ref{thm:parameter-selection} to arbitrary PL properties.
For simplicity and to facilitate intuition, the sub-properties of all propositional properties are initially restricted to inequality properties. Later in this section, we provide a way to compute the ETTs for an arbitrary PL property.

\subsubsection{The $\land$ operator}

By Definition \ref{def:stl-robustness}, for a property $\varphi_{\land} = \varphi_1 \land \varphi_2$ we desire to find an ETT policy $\rhoett{\varphi_{\land, ...}}$ that ensures
\begin{equation}\label{eq:and-ett-regulation}
    \begin{split}
        & \rhoett{\varphi_\land, ...} \Rightarrow (\robustness{\varphi}{t_k} > 0) \\
        & \equiv (\rhoett{\varphi_1, ...} \Rightarrow (\robustness{\varphi_1}{t_k})) \land \\
        & \indent \text{ } (\rhoett{\varphi_2, ...} \Rightarrow (\robustness{\varphi_2}{t_k})),\\
        & \forall t_k \in [t_0, t_{end}].
    \end{split}
\end{equation}

As a result, we can treat the analysis of the sub-properties $\varphi_1$ and $\varphi_2$ to determine suitable static ETTs or ETT regulation parameters, separately. 

\subsubsection{The $\lor$ operator}

According to Definition \ref{def:ett-regulation-safety-1}, for the property $\varphi_{\lor} = \varphi_1 \lor \varphi_2$, we desire to find an ETT policy $\rhoett{\varphi_{\lor}, ...}$ such that
\begin{equation}\label{eq:or-ett-regulation}
    \begin{split}
        & \rhoett{\varphi_\lor, ...} \Rightarrow (\rho(\boldsymbol{\hat{x}_{\varphi_\lor, t_k}}, \varphi_\lor) > 0 ) \\
        & \equiv \rhoett{\varphi_\lor, ...} \Rightarrow \\
        & \indent [(\rho(\boldsymbol{\hat{x}_{\varphi_1, t_k}}, \varphi_1) > 0) \lor (\rho(\boldsymbol{\hat{x}_{\varphi_2, t_k}}, \varphi_2) > 0)], \\
        & \forall t_k \in [t_0, t_{end}].
    \end{split}
\end{equation}

This is a notable difference from the $\land$ operator. We could treat the $\varphi_1$ and $\varphi_2$ sub-properties separately as we did with the $\land$ operator since $(\varphi_1 \land \varphi_2) \Rightarrow (\varphi_1 \lor \varphi_2)$ but we then loose out on potential communication savings as only one property has to be satisfied at any given time. According to Eq. \eqref{eq:or-ett-regulation}, the robustness of one sub-property can become negative, while the overall property is still satisfied. Thus we need to define how the ETT for a sub-property with negative robustness is calculated. We argue that when an inequality property has negative robustness, the ETT should be equal to 0 providing the system with the best possible accuracy of states related to that specific property. Thus we refine Definition \ref{def:ett-regulation-safety-1} by adding:
\begin{equation}\label{eq:min-max}
    \ettpositivenoargs{y_i}{\inequalityprop{}}{t_{k+1}} = \max(\ettnoargs{y_i}{\inequalityprop{}}{t_{k+1}}, 0).
\end{equation}

We wish to leverage the relaxation in Eq. \eqref{eq:or-ett-regulation} to enlarge the ETTs of the inequality sub-property that requires the best accuracy (and thus the most measurements) to satisfy. 
We intuitively want to avoid directly mixing ETTs as different state estimates can have different ranges of values and thus the properties have different ranges of robustness and ETTs. Later in Example \ref{ex:robustness-interval}, we provide some intuition for this decision. To this end, we propose to use the normalized robustness $\zeta_\rho$ which is defined in Definition \ref{def:relative-robustness-interval}, to represent the ``criticality" of an PL inequality property which is then later applied in Definition \ref{def:prop-operators-ett-regulation} to enlarge the ETT of the less critical inequality property. 
\begin{definition}[Normalized robustness $\zeta_\rho$]\label{def:relative-robustness-interval}
    We define the normalized robustness for an inequality property $\inequalityprop{}$ as:
    \begin{align*}
        \relativerobustness{\inequalityprop{}}{t_k} = \frac{\max(\rho(\inequalityprop{}, \boldsymbol{\hat{x}_{t_k}}), 0)}{\rho_{\max}(\inequalityprop{})},
    \end{align*}
    where $\rho_{max}(\inequalityprop{}) > 0$ and is the maximum possible robustness of the inequality property $\inequalityprop{}$.  
\end{definition}

We now define the ETT refinement mechanism for propositional PL properties in Definition \ref{def:prop-operators-ett-regulation}. 
\begin{definition}\label{def:prop-operators-ett-regulation}
    We define the ETT regulation mechanism for propositional property $\varphi$ with inequality sub-properties as a refinement $\ettrefined{y_i}{\varphi}{t_{k+1}}{t_k}{\beta_{\varphi, t_k}}$ of $\ettpositivenoargs{y_i}{\varphi}{t_{k+1}}$ from \cref{eq:min-max} for a signal $y_i$:
    \begin{align*}
        & \ettrefined{y_i}{\varphi}{t_{k+1}}{t_k}{\beta_{\varphi,t_k}} = \\
        & \indent \begin{cases}
            \underset{\varphi' \in \lbrace \inequalityprop{1}, \inequalityprop{2} \rbrace}{\min}(\ettin{y_i}{\varphi'}{t_{k+1}}{t_k}{\beta_{\varphi, t_k}}), \\ 
            \indent \indent \indent \indent \text{if } \varphi = \inequalityprop{1} \lor \inequalityprop{2}, \\
            \underset{\varphi' \in \lbrace \inequalityprop{1}, \inequalityprop{2} \rbrace}{\min}(\ettin{y_i}{\varphi'}{t_{k+1}}{t_k}{0}) \\ 
            \indent \indent \indent \indent \text{if } \varphi = \inequalityprop{1} \land \inequalityprop{2}, \\
        \end{cases} \\
        & \indent \text{where } \beta_{\varphi, t_k} = \underset{\varphi' \in \lbrace \inequalityprop{1}, \inequalityprop{2} \rbrace}{\max}(\relativerobustness{\varphi'}{t_k}), \\
        & \indent \ettin{y_i}{\inequalityprop{}}{t_{k+1}}{t_k}{\beta_{\varphi, t_k}} = \\
        & \indent \begin{cases}
            \ettpositivenoargs{y_i}{\inequalityprop{}}{t_{k+1}} \\
            \indent + \max(\beta_{\varphi, t_k} - \relativerobustness{\inequalityprop{}}{t_k}, 0)\frac{\rho_{max}(\inequalityprop{})}{\epsilonparam{y_i}{\inequalityprop{}}} \\ 
            \indent \indent \indent \indent \text{if } y_i \in \propsignals{\inequalityprop{}}, \\
            \infty \indent \indent \indent \text{otherwise},
        \end{cases} \\
    \end{align*}
    Definition \ref{def:ett-regulation-overalpping-sets} is applied within Definition \ref{def:prop-operators-ett-regulation} using the $\min$ operator for each sub-property. If more than one property is present, Definition \ref{def:ett-regulation-overalpping-sets} should be applied externally.
\end{definition}

\begin{example}[Normalized robustness]\label{ex:normalized-robustness}
    Consider the property $\varphi_{\zeta_\rho} = (x_1 < 1) \lor (x_2 > 1000)$ where the value of $x_1 \in [0, 1.5]$ and the value of $x_2 \in [-1000, 3000]$. We use the parameters $\epsilon_{x_1, \varphi_{\zeta_\rho}} = \epsilon_{x_2, \varphi_{\zeta_\rho}} = 5$. At time $t_k$, $x_{1, t_k} = 1.2$ and $x_{2, t_k} = 1500$, resulting in robustnesses of $\rho(x_1 < 1) = 1 - 1.2 = -0.2$ and $\rho(x_2 > 1000) = 1500 - 1000 = 500$ and ETTs: $\ettnoargs{x_1}{\varphi_{\zeta_\rho}}{t_{k+1}} = \frac{0}{5} = 0$ and $\ettnoargs{x_2}{\varphi_{\zeta_\rho}}{t_{k+1}} = \frac{500}{5} = 100$. As $(x_2 > 1000)$ the overall property is satisfied, but we want to utilize this fact to relax $\ettnoargs{x_1}{\varphi_{\zeta_\rho}}{t_{k+1}}$ as setting $\ettnoargs{x_1}{\varphi_{\zeta_\rho}}{t_{k+1}} = 0$ will likely result in many triggered events. Directly setting $\ettnoargs{x_1}{\varphi_{\zeta_\rho}}{t_{k+1}} = \ettnoargs{x_2}{\varphi_{\zeta_\rho}}{t_{k+1}} = 100$ is likely too large and does not preserve the meaningful range of ETTs for $\ettnoargs{x_1}{\varphi_{\zeta_\rho}}{t_{k+1}}$. Using Definitions \ref{def:relative-robustness-interval} and \ref{def:prop-operators-ett-regulation}, we instead obtain: $\zeta_\rho(x_1 < 1, -1.2) = 0$, $\zeta_\rho(x_2 > 1000, 1500) = \frac{500}{2000}$. Refining $\ettnoargs{x_1}{\varphi_{\zeta_\rho}}{t_{k+1}}$ gives us: $\ettrefinednoargs{x_1}{\varphi_{\zeta_\rho}}{t_{k+1}} = \ettnoargs{x_1}{\varphi_{\zeta_\rho}}{t_{k+1}} + (\frac{500}{2000} - 0)\frac{1}{5} = 0 + \frac{1}{4}\frac{1}{5} = 0.05$ and $\ettrefinednoargs{x_2}{\varphi_{\zeta_\rho}}{t_{k+1}} = \ettnoargs{x_2}{\varphi_{\zeta_\rho}}{t_{k+1}}$. Thus Definition \ref{def:relative-robustness-interval} gives us a way to translate/relate the ``criticality'' of properties between each other. 
\end{example}

In the following Section, we adapt Definition \ref{def:prop-operators-ett-regulation} to arbitrary PL properties.

\subsection{ETT regulation for arbitrary PL properties}\label{sec:ett-regulation-arb-propositional}

\subsubsection{Negation of PL properties}

To regulate the ETT using Definition \ref{def:ett-regulation-safety-1} for an arbitrary PL property which potentially includes negations of logical operators, we transform the property into its Negation Normal Form (NNF), where the negations are propagated to the underlying inequality properties. The robustness of the negated inequality is then used to regulate the ETT according to Definition \ref{def:ett-regulation-safety-1}.

\begin{example}[Negation Normal Form]
    Consider the PL property $\varphi_{\text{NNF}} = (\varphi_1 \land \neg(\varphi_2 \rightarrow \varphi_3))$ where $\varphi_1, \varphi_2$ and $\varphi_3$ are all propositional PL properties. We propagate the negation as follows
    \begin{align*}
        & \neg(\varphi_2 \rightarrow \varphi_3) \\
        & \equiv \neg \neg \varphi_2 \lor \neg\varphi_3 \\
        & \equiv \varphi_2 \lor \neg\varphi_3,
    \end{align*}
    which gives us the final property $\varphi_{\text{NNF}} \equiv \varphi_1 \land (\varphi_2 \lor \neg \varphi_3)$.
\end{example}

Initially, we need to define how to calculate $\relativerobustness{\varphi}{t_k}$ for an arbitrary PL property.

\begin{definition}\label{def:ett-recursive-upstream}
    The Normalized robustness $\relativerobustnessrefined{\varphi}{t_k}$ for an arbitrary PL property $\varphi$ is defined recursively as:
    \begin{align*}
        & \relativerobustnessrefined{\varphi}{t_k} =  \\
        & \indent \begin{cases}
            \max(\relativerobustnessrefined{\varphi_1}{t_k}, \relativerobustnessrefined{\varphi_2}{t_k}) & \text{ if } \varphi = (\varphi_1 \lor \varphi_2), \\
            \min(\relativerobustnessrefined{\varphi_1}{t_k}, \relativerobustnessrefined{\varphi_2}{t_k}) & \text{ if } \varphi = (\varphi_1 \land \varphi_2), \\
            \relativerobustness{\inequalityprop{}}{t_k} & \text{ otherwise},
        \end{cases}
    \end{align*}
    where $\relativerobustness{\inequalityprop{}}{t_k}$ is calculated according to Definition \ref{def:relative-robustness-interval}.
\end{definition}

Next, we present a property of Definition \ref{def:ett-recursive-upstream} that will be useful in proving Theorem \ref{thm:parameter-selection} for propositional and later arbitrary PL properties.
\begin{lemma}\label{lem:relative-robustness-prop}
    $(\relativerobustnessrefined{\varphi}{t_k} > 0) \Leftrightarrow (\robustness{\varphi}{t_k} > 0)$.
\end{lemma}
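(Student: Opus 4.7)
The plan is to prove this by structural induction on the syntax of $\varphi$ written in Negation Normal Form, mirroring the recursive shape of Definitions \ref{def:stl-robustness} and \ref{def:ett-recursive-upstream}. Since the property is assumed to be in NNF (per Section \ref{sec:ett-regulation-arb-propositional}), negations only appear in front of inequality atoms, which I would simply treat as another inequality atom (with its own $\rho_{\max}$); this avoids having to handle $\neg$ as a separate inductive case.

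For the base case, let $\varphi = \inequalityprop{}$. Then Definition \ref{def:ett-recursive-upstream} gives $\relativerobustnessrefined{\inequalityprop{}}{t_k} = \relativerobustness{\inequalityprop{}}{t_k}$, which by Definition \ref{def:relative-robustness-interval} equals $\max(\robustness{\inequalityprop{}}{t_k},0)/\rho_{\max}(\inequalityprop{})$. Since $\rho_{\max}(\inequalityprop{}) > 0$ by assumption, this quantity is strictly positive iff $\max(\robustness{\inequalityprop{}}{t_k},0) > 0$ iff $\robustness{\inequalityprop{}}{t_k} > 0$, giving the biconditional.

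For the inductive step on $\lor$, assume the claim for $\varphi_1, \varphi_2$. Then
\begin{align*}
\relativerobustnessrefined{\varphi_1 \lor \varphi_2}{t_k} > 0
&\Leftrightarrow \max(\relativerobustnessrefined{\varphi_1}{t_k}, \relativerobustnessrefined{\varphi_2}{t_k}) > 0 \\
&\Leftrightarrow \relativerobustnessrefined{\varphi_1}{t_k} > 0 \,\text{ or }\, \relativerobustnessrefined{\varphi_2}{t_k} > 0 \\
&\Leftrightarrow \robustness{\varphi_1}{t_k} > 0 \,\text{ or }\, \robustness{\varphi_2}{t_k} > 0 \\
&\Leftrightarrow \robustness{\varphi_1 \lor \varphi_2}{t_k} > 0,
\end{align*}
where the first step uses Definition \ref{def:ett-recursive-upstream}, the third uses the inductive hypothesis on each sub-property, and the last uses the $\max$-based semantics of $\lor$ in Definition \ref{def:stl-robustness}. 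The case for $\land$ is symmetric, replacing $\max$ with $\min$ and ``or'' with ``and''.

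I do not anticipate a serious obstacle here: the recursive definitions of $\rho$ and $\relativerobustnessrefined{\cdot}{\cdot}$ are structurally identical up to the $\max(\cdot, 0)$ clamping and the positive normalization factor $\rho_{\max}$ at the leaves, neither of which affects the sign. The only subtle point is that clamping by $\max(\cdot, 0)$ collapses all non-positive robustness values to $0$, so the biconditional is genuinely only about the sign (not the magnitude) — which is exactly what the lemma states. Concluding this observation and noting that the NNF assumption guarantees every negation has been absorbed into an inequality atom (so no additional case is required) finishes the induction.
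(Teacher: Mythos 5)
Your proposal is correct and follows essentially the same route as the paper: a structural induction on $\varphi$ whose base case uses $\rho_{\max}(\inequalityprop{}) > 0$ to see that the clamped, normalized robustness preserves the sign, and whose $\lor$/$\land$ cases transfer positivity through the matching $\max$/$\min$ semantics of Definitions \ref{def:ett-recursive-upstream} and \ref{def:stl-robustness}. The only difference is cosmetic — you make the inductive hypothesis and the NNF treatment of negated atoms explicit, where the paper leaves them implicit.
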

\begin{proof}
    We prove each case in Definition \ref{def:ett-recursive-upstream}, below. \\

    \paragraph*{\textbf{otherwise} ($\varphi = \inequalityprop{}$)} It follows from Definition \ref{def:relative-robustness-interval} that $\relativerobustnessrefined{\varphi}{t_k}$ preserves the sign since $\rho_{max} > 0$. \\

    \paragraph*{$\varphi = \varphi_1 \lor \varphi_2$} Assume $\relativerobustnessrefined{\varphi}{t_k} > 0$. Then by Definition \ref{def:ett-recursive-upstream} $(\relativerobustnessrefined{\varphi_1}{t_k} > 0) \lor (\relativerobustnessrefined{\varphi_2}{t_k} > 0)$. Consequently, we have $(\robustness{\varphi_1}{t_k} > 0) \lor (\robustness{\varphi_2}{t_k} > 0)$, which through Definition \ref{def:stl-robustness} results in $\robustness{\varphi}{t_k}$. Thus $(\relativerobustnessrefined{\varphi}{t_k} > 0) \Rightarrow (\robustness{\varphi}{t_k} > 0)$. Following the reverse procedure, we can obtain $\robustness{\varphi}{t_k} \Rightarrow \relativerobustnessrefined{\varphi}{t_k}$. \\

    \paragraph*{$\varphi = \varphi_1 \land \varphi_2$} The proof for this case can be obtained following the same procedure as for the $\lor$ operator.

    Thus for any arbitrary propositional property $\varphi$, $(\relativerobustnessrefined{\varphi}{t_k} > 0) \Leftrightarrow (\robustness{\varphi}{t_k} > 0)$.
\end{proof}

After determining $\relativerobustnessrefined{\varphi}{t_k}$ we need a slight adaptation of Definition \ref{def:prop-operators-ett-regulation} to account for nested propositional operators which we provide in Definition \ref{def:arbitrary-prop-operators-ett-regulation}.
\begin{definition}[Arbitrary propositional properties ETT refinement]
    \label{def:arbitrary-prop-operators-ett-regulation}
    Consider the arbitrary propositional property $\varphi$. We define the ETT refinement $\ettrefinedarb{y_i}{\varphi}{t_{k+1}}{t_k}{\beta_{\varphi, t_k}}$ for a signal $y_i \in \propsignals{\varphi}$ at time $t_{k+1}$:
    \begin{align*}
        & \ettrefinedarb{y_i}{\varphi}{t_{k+1}}{t_k}{\beta_{\varphi, t_k}} = \\ 
        & \indent \begin{cases}
            \ettin{y_i}{\varphi}{t_{k+1}}{t_k}{\beta_{\varphi, t_k}} \\
            \indent \indent \indent \text{if } \varphi = \inequalityprop{}, \\
            \underset{\varphi' \in \lbrace \varphi_1, \varphi_2 \rbrace}{\min}(\ettrefinedarb{y_i}{\varphi'}{t_{k+1}}{t_k}{\beta_{\varphi, t_k}}) \\
            \indent \indent \indent \text{if } \varphi = \varphi_1 \land \varphi_2, \\
            \begin{aligned}
                & \underset{(\varphi', \varphi'') \in \lbrace (\varphi_1, \varphi_2), (\varphi_2, \varphi_1) \rbrace}{\min} \\
                & \indent (\ettrefinedarb{y_i}{\varphi'}{t_{k+1}}{t_k}{\max(\beta_{\varphi, t_k}, \relativerobustnessrefined{\varphi''}{t_k})}) \\
            \end{aligned} \\
            \indent \indent \indent \text{if } \varphi = \varphi_1 \lor \varphi_2, \\
        \end{cases}
    \end{align*} 
    where $\ettin{y_i}{\varphi}{t_{k+1}}{t_k}{\beta_{\varphi, t_k}}$ is defined in Definition \ref{def:prop-operators-ett-regulation} and we initialize $\beta_{\varphi, t_k} = \relativerobustnessrefined{\varphi}{t_k}$ of the top-level property $\varphi$.
    The $\min$ operations take care of potentially multiple defined ETTs for a given signal, and thus Definition \ref{def:ett-regulation-overalpping-sets} is implicitly applied.
\end{definition}
An example of the application of Definition \ref{def:ett-recursive-upstream} and the propagation of $\relativerobustnessrefined{\varphi}{t_k}$ in Definition \ref{def:arbitrary-prop-operators-ett-regulation} is visualized in Fig. \ref{fig:property-visualization}.

\begin{figure}[h]
    \centering
    \begin{subfigure}{0.4\textwidth}
        \includegraphics[width=\textwidth]{./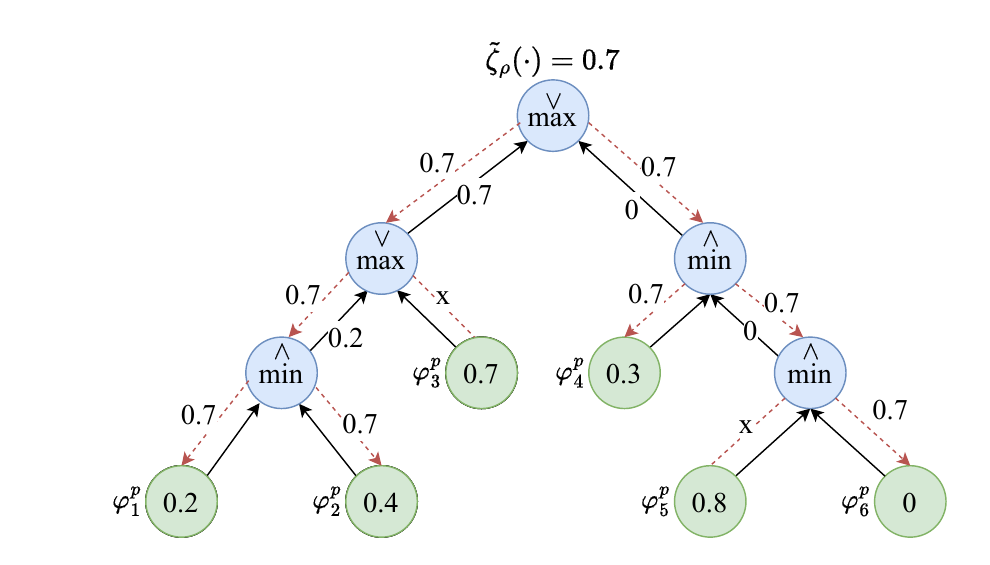}
        \caption{}
        \label{fig:stl-property-binary-tree}
    \end{subfigure}
    \caption{A binary tree representation of the PL property \break$((\varphi_1 \land \varphi_2) \lor \varphi_3) \lor (\varphi_4 \land (\varphi_5 \land \varphi_6))$. The solid black arrows indicate the application of Definition \ref{def:ett-recursive-upstream} and the red dotted arrows indicate the applied $\beta$ value in Definition \ref{def:arbitrary-prop-operators-ett-regulation} for each inequality property. 
    }
    \label{fig:property-visualization}
\end{figure}

Definition \ref*{def:arbitrary-prop-operators-ett-regulation} concludes the proposed ETT regulation mechanism $\rhoett{\cdot}$. In the following Section, we extend the property in Theorem \ref{thm:parameter-selection} arbitrary PL properties using a modified version of $\rhoett{\cdot}$.
\subsection{Providing detection guarantees for arbitrary PL properties}\label{sec:arb-prop-detection-guarantees}

We start by adapting several Definitions from Sections \ref{sec:ett-regulation-propositional} and \ref{sec:ett-regulation-arb-propositional}.
We denote the ETT $\geq 0$ calculated using the result of Theorem \ref{thm:parameter-selection} as:
\begin{equation}\label{eq:thm-1-ett}
    \thmett{y_i}{\inequalityprop{}} = \frac{\max(\nextsteprobustnesswc{\inequalityprop{}}{t_{k+1}|t_k}, 0)}{\frac{\partial w(\robustnessintervals{\inequalityprop{}}{})}{\partial \ettnoargs{y_i}{\inequalityprop{}}{t_{k+1}}} \lambdaparam{y_i}{\inequalityprop{}} \epsilonparam{\rho}{\inequalityprop{}}}.
\end{equation}
Next, we adapt the normalized robustness from Definitions \ref{def:relative-robustness-interval} and \ref{def:ett-recursive-upstream}.

\begin{definition}\label{def:relative-robustness-interval-wc}
    We define the worst-case normalized robustness of an inequality property $\inequalityprop{}$ as:
    \begin{align*}
        \relativerobustnessnextstep{\inequalityprop{}}{t_{k+1}|t_k} = \frac{\max(\nextsteprobustnesswc{\inequalityprop{}}{t_{k+1}|t_k}, 0)}{\rho_{max}(\inequalityprop{})}
    \end{align*}
\end{definition}
\begin{definition}\label{def:relative-robustness-interval-min}
    We define the worst-case normalized robustness $\relativerobustnessrefinednextstep{\varphi}{t_{k+1}|t_k}$ of an arbitrary PL property as and adaptation of $\relativerobustnessrefined{\varphi}{t_{k}}$ as follows:
    \begin{align*}
        & \relativerobustnessrefinednextstep{\varphi}{t_{k+1}|t_k} = \\
        & \indent \begin{cases}
            \underset{\varphi' \in \lbrace \varphi_1, \varphi_2 \rbrace}{\max}\relativerobustnessrefinednextstep{\varphi'}{t_{k+1}|t_k} & \text{ if } \varphi = \varphi_1 \lor \varphi_2, \\
            \underset{\varphi' \in \lbrace \varphi_1, \varphi_2 \rbrace}{\min}\relativerobustnessrefinednextstep{\varphi'}{t_{k+1}|t_k} & \text{ if } \varphi = \varphi_1 \land \varphi_2, \\
            \relativerobustnessnextstep{\varphi}{t_{k+1}|t_k} & \text{ if } \varphi = \inequalityprop{}.
        \end{cases}
    \end{align*}
\end{definition}
Next, we adapt $\ettin{y_i}{\inequalityprop{}}{t_{k+1}}{t_k}{\beta_{\varphi, t_k}}$ from Definition \ref{def:prop-operators-ett-regulation}.
\begin{definition}\label{def:ett-in-wc}
    We define an adapted version of $\ettin{y_i}{\inequalityprop{}{}}{t_{k+1}}{t_k}{\beta_{\varphi, t_k}}$ from Definition \ref{def:prop-operators-ett-regulation} to help provide the guarantees in Theorem \ref{thm:parameter-selection}.
    \begin{align*}
         & \ettinwc{y_i}{\varphi}{t_{k+1}}{t_{k+1}|t_k}{\underline{\beta}_{\varphi, t_{k+1}|t_k}} = \\
         & \indent \begin{cases}
            \thmett{y_i}{\inequalityprop{}} \\
            \indent + \max(\underline{\beta}_{\varphi, t_{k+1}|t_k} - \relativerobustnessnextstep{\inequalityprop{}}{t_{k+1}|t_k}, 0)\frac{\rho_{max}(\inequalityprop{})}{\epsilonparam{y_i}{\inequalityprop{}}} \\ 
            \indent \indent \indent \indent \text{if } y_i \in \propsignals{\inequalityprop{}}, \\
            \infty \indent \indent \indent \text{otherwise}.
         \end{cases}
    \end{align*}
\end{definition}
We are now ready to present $\rhoettwc{\cdot}$ for arbitrary PL properties. Due to space constraints, we do not provide the explicit definition, but instead list the changes necessary to Definition \ref{def:arbitrary-prop-operators-ett-regulation}, in Theorem \ref{thm:prop-thm-1-preservation}.
\begin{theorem}\label{thm:prop-thm-1-preservation}
    Definition \ref{def:arbitrary-prop-operators-ett-regulation} ensures 
    \begin{equation}\label{eq:prop-property-convergence}
        (\robustness{\varphi}{t_k} > 0) \Leftrightarrow (\truerobustness{\varphi}{t_k} > 0)
    \end{equation}  
    for an arbitrary propositional property $\varphi$ if $\relativerobustnessrefinednextstep{\varphi}{t_{k+1}|t_k}$ is used instead of $\relativerobustnessrefined{\varphi}{t_k}$, the inequality properties satisfy the constraints in Theorem \ref{thm:parameter-selection} and $\ettinwc{y_i}{\inequalityprop{}}{t_{k+1}}{t_{k+1}|t_k}{\underline{\beta}_{\varphi, t_{k+1}|t_k}}$ is used instead of $\ettin{y_i}{\varphi}{t_{k+1}}{t_k}{\beta_{\varphi,t_k}}$.
\end{theorem}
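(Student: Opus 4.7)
The plan is to prove the theorem by strong structural induction on $\varphi$, with a strengthened inductive hypothesis that accommodates the recursive propagation of the $\beta$ parameter in Definition \ref{def:arbitrary-prop-operators-ett-regulation}. Concretely, we prove the statement: for any propositional property $\psi$ and any $\beta$ with $0 \leq \beta \leq \relativerobustnessrefinednextstep{\psi}{t_{k+1}|t_k}$ passed as the $\beta$ argument when recursing into $\psi$, the ETTs produced by Definition \ref{def:arbitrary-prop-operators-ett-regulation} with the substitutions prescribed in this theorem satisfy $(\robustness{\psi}{t_{k+1}} > 0) \Leftrightarrow (\truerobustness{\psi}{t_{k+1}} > 0)$. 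The theorem then follows by instantiating at the top of the recursion with $\psi = \varphi$ and $\beta = \relativerobustnessrefinednextstep{\varphi}{t_{k+1}|t_k}$, in which case the hypothesis holds with equality.

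For the base case $\psi = \inequalityprop{}$, the bound $\beta \leq \relativerobustnessnextstep{\inequalityprop{}}{t_{k+1}|t_k}$ zeroes out the relaxation term in Definition \ref{def:ett-in-wc}, so $\ettinwc{y_i}{\inequalityprop{}}{t_{k+1}}{t_{k+1}|t_k}{\beta}$ collapses to $\thmett{y_i}{\inequalityprop{}}$ and Theorem \ref{thm:parameter-selection} supplies the bi-implication directly. For $\psi = \psi_1 \land \psi_2$, Definition \ref{def:arbitrary-prop-operators-ett-regulation} passes $\beta$ unchanged into each sub-recursion, and $\beta \leq \min(\relativerobustnessrefinednextstep{\psi_1}{t_{k+1}|t_k}, \relativerobustnessrefinednextstep{\psi_2}{t_{k+1}|t_k})$ makes the inductive hypothesis applicable to each $\psi_i$. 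The resulting sub-bi-implications combine via $\robustness{\psi}{t_{k+1}} = \min(\robustness{\psi_1}{t_{k+1}}, \robustness{\psi_2}{t_{k+1}})$ (and the analogue on the true state) to yield the bi-implication for $\psi$.

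The $\lor$ case is the main obstacle, since augmenting $\beta$ to $\max(\beta, \relativerobustnessrefinednextstep{\psi''}{t_{k+1}|t_k})$ can exceed $\relativerobustnessrefinednextstep{\psi'}{t_{k+1}|t_k}$ and thereby inflate the ETTs beyond what Theorem \ref{thm:parameter-selection} by itself certifies. The resolution is a dichotomy on whether any $\relativerobustnessrefinednextstep{\psi_j}{t_{k+1}|t_k} > 0$. If so, an easy variant of Lemma \ref{lem:relative-robustness-prop} adapted to the worst-case normalized robustness of Definition \ref{def:relative-robustness-interval-min} yields $\nextsteprobustnesswc{\psi_j}{t_{k+1}|t_k} > 0$; because the worst-case predicted robustness lower-bounds both the estimated and the true robustness at $t_{k+1}$, both sides of the bi-implication for the disjunction are trivially true regardless of the ETTs chosen elsewhere. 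If not, then $\beta \leq \max_j \relativerobustnessrefinednextstep{\psi_j}{t_{k+1}|t_k} = 0$ forces $\beta = 0$, so the augmented value passed into each sub-recursion is still $0$, the inductive hypothesis applies to each $\psi_j$, and the sub-bi-implications compose through $\robustness{\psi}{t_{k+1}} = \max(\robustness{\psi_1}{t_{k+1}}, \robustness{\psi_2}{t_{k+1}})$.

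Overlapping signals across sub-properties are absorbed by the $\min$ already built into Definition \ref{def:arbitrary-prop-operators-ett-regulation}: the extension of Lemma \ref{lem:min-ett} used in Section \ref{sec:ett-regulation-propositional} guarantees that shrinking an ETT only narrows the associated robustness interval and hence preserves each sub-bi-implication. The crux of the whole argument is choosing the strengthened hypothesis $\beta \leq \relativerobustnessrefinednextstep{\psi}{t_{k+1}|t_k}$ so that the $\lor$-driven $\beta$-inflation is controlled, and then spotting the clean ``trivial satisfaction or $\beta = 0$'' dichotomy at every $\lor$ node; once that dichotomy is in place the remaining work reduces to routine bookkeeping over the structural recursion.
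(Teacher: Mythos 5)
Your proof is correct under the same assumptions the paper itself relies on (the predicted interval is sound, i.e.\ both $\truerobustness{\varphi}{t_{k+1}}$ and $\robustness{\varphi}{t_{k+1}}$ lie in $\robustnessintervals{\varphi}{t_{k+1}|t_k}$; Theorem~\ref{thm:parameter-selection} holds for every inequality leaf; and the Lemma~\ref{lem:relative-robustness-prop}-style equivalence adapted to the worst-case normalized robustness), but it is organized quite differently from the paper's argument. The paper works with the contrapositive sign: it assumes $\truerobustness{\varphi}{t_{k+1}} \leq 0$, singles out the inequality leaf $\inequalityprop{\rho}$ whose robustness realizes the overall robustness, and traces the propagated $\underline{\beta}$ down to that leaf, arguing that non-positive worst-case robustness forces the relevant normalized robustnesses to zero, so the relaxation term in Definition~\ref{def:ett-in-wc} vanishes, the leaf keeps its Theorem~\ref{thm:parameter-selection} ETT, and detection for $\inequalityprop{\rho}$ transfers to $\varphi$. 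You instead run a structural induction with the strengthened invariant $0 \leq \beta \leq \relativerobustnessrefinednextstep{\psi}{t_{k+1}|t_k}$ and settle the $\lor$ case by a dichotomy: either some disjunct has strictly positive worst-case predicted robustness, in which case both the estimated and the true robustness at $t_{k+1}$ are positive and the biconditional holds regardless of how the $\beta$-inflation enlarges ETTs, or all are zero, in which case $\beta$ stays zero and the unrelaxed inductive hypothesis applies. The key insight is shared --- the relaxation can only enlarge ETTs when some disjunct is worst-case satisfied, precisely where detection is not at stake --- but your route buys completeness: both directions of the equivalence are handled explicitly, arbitrary nesting is covered without having to identify a single critical leaf, and overlapping signal sets are dispatched via the Lemma~\ref{lem:min-ett} monotonicity of the interval width in the ETTs, points the paper's shorter write-up leaves implicit (indeed, its $\lor$ step is phrased in terms of estimated sub-robustnesses where the worst-case predicted bounds are what is actually needed). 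The paper's version, conditioning only on the negative-robustness case and the critical leaf, is more compact but less self-contained.
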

\begin{proof}

    We prove that using the ETT regulation described in Theorem \ref{thm:prop-thm-1-preservation} ensures $(\robustness{\varphi}{t_k} \leq 0) \Leftrightarrow (\truerobustness{\varphi}{t_k} \leq 0)$ which then guarantees $(\robustness{\varphi}{t_k} > 0) \Leftrightarrow (\robustness{\varphi}{t_k} > 0)$. 
    Using the same technique as in Lemma \ref{lem:relative-robustness-prop}, we can show that
    \begin{equation}
        (\relativerobustnessrefinednextstep{\varphi}{t_{k+1}|t_k} = 0) \Leftrightarrow (\nextsteprobustnesswc{\varphi}{t_{k+1}|t_k} \leq 0).
    \end{equation}
    Now assume that $\truerobustness{\varphi}{t_{k+1}} \leq 0$. Since $\truerobustness{\varphi}{t_{k+1}} \in \robustnessintervals{\varphi}{t_{k+1}|t_k}$, then $\nextsteprobustnesswc{\varphi}{t_{k+1}|t_k} \leq 0$.
    We now show that the ETTs of the inequality property that produces the resulting robustness remain unaffected by the operations in Definition \ref{def:prop-operators-ett-regulation} in the case of negative robustness. We denote the overall property $\varphi_o$ and the inequality property that produces the robustness value of the overall property, $\inequalityprop{\rho}$.
    We consider the $\lor$ and $\land$ operators in turn.

    \paragraph*{$\varphi = \varphi_1 \lor \varphi_2$}
    In order for $\inequalityprop{\rho}$ to be a sub-property of $\varphi$, then $\robustness{\varphi}{t_{k+1}} = \robustness{\inequalityprop{\rho}}{t_{k+1}}$. Since we assume $\truerobustness{\varphi}{t_{k+1}} \leq 0$, this entails $\robustness{\varphi_1}{t_{k+1}} \leq 0$ and $\robustness{\varphi_2}{t_{k+1}} \leq 0$ which again entails $\relativerobustnessrefinednextstep{\varphi}{t_{k+1}|t_{k}} = 0$. Thus for the sub-property containing or equal to $\inequalityprop{\rho}$, the propagated $\underline{\beta}_{t_{k+1} | t_{k}}$ value is always equal to 0.
    
    \paragraph*{$\varphi = \varphi_1 \land \varphi_2$}

    According to Definition \ref{def:prop-operators-ett-regulation}, $\underline{\beta}_{\varphi, t_{k+1}|t_k}$ is simply propagated and is equal to 0 if the sub-properties contain or are equal to $\inequalityprop{\rho}$.

    Since the $\underline{\beta}_{t_k|t_{k-1}}$ value propagated to $\inequalityprop{\rho}$ will always be 0 in the case of overall negative robustness the term $\max(\underline{\beta}_{t_{k+1}|t_k} - \relativerobustnessrefinednextstep{\inequalityprop{\rho}}{t_{k+1}|t_k}, 0)$ in Definition \ref{def:ett-in-wc} will equal 0 meaning that the ETTs related to $\inequalityprop{\rho}$ are unchanged and thus satisfy Theorem \ref{thm:parameter-selection} for the inequality property $\inequalityprop{\rho}$ and thus for the overall property $\varphi_o$. 

    \end{proof}

This concludes the presentation of our proposed ETT regulation mechanism for PL properties.
In the following Section, we evaluate the TT, CETT, $\rhoett{\cdot}$ and $\rhoettwc{\cdot}$ in a simulated case study where we explore various parameter configurations and their impact on the robustness and number of triggered events.

\section{Case study}\label{sec:case-study}

To evaluate our proposed threshold regulation mechanism, a simulated vehicle convoy in an adaptive cruise control (ACC) scenario is used. Initially, a single lane with two vehicles is considered to evaluate the ETT regulation mechanism in Definition \ref{def:ett-regulation-safety-1} for a single inequality property. To evaluate more complex propositional properties, we extend the scenario to include an additional lane with faster vehicles. Fig. \ref{fig:scenario-overview} shows an overview of the two setups. Next, we describe the simulation and scenario setup.

\begin{figure}[h]
    \centering
    \includegraphics[width=0.5\textwidth]{./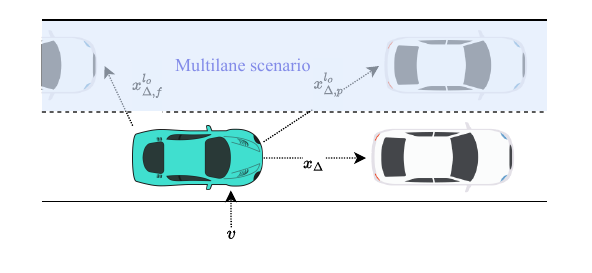}
    \caption{An overview of the case-study simulated scenarios. The blue vehicle corresponds to the ACC vehicle which can measure its own speed ($v$), the distance to the preceding vehicle in the same lane $x_{p}$ and the distance to the preceding $x_{l_o, p}$ and following $x_{l_o, f}$ vehicles in the fast lane in the multi-lane scenario.}
    \label{fig:scenario-overview}
    \vspace*{-10pt}
\end{figure}

\subsection{Simulation setup}

All vehicles in the convoy can be described by the linear discrete-time state-space model specified in Eq. \ref{eq:lti}.

The noise is assumed to be standard Gaussian noise with zero mean and the specified variance.
For the remaining vehicles, we use models without process noise or drift to control their evolution over time. The ACC vehicle tracks itself and all other relevant vehicles using the above model using a separate Kalman filter (KF) for each vehicle. The $A$ and $B$ matrices are the same for all instances and are given by
\begin{equation}
    A = \begin{bmatrix}
        1 & T_s \\ 
        0 & 1 
    \end{bmatrix},
    B = \begin{bmatrix}
        \frac{1}{2} T_s^2 \\
        T_s
    \end{bmatrix},
\end{equation} 
where $T_s$ is the sampling interval. As the ACC vehicle does not know the control input for the other vehicles, $B$ can be disregarded when tracking other vehicles. The $C$ matrices for the ACC vehicle and remaining tracked vehicles denoted with an $_o$ subscript, the process noise of the ACC vehicle $\boldsymbol{w_{ACC}}$ and other vehicles $\boldsymbol{w_{o}}$, and the measurement noise of the speed $\boldsymbol{r_v}$ and distance $\boldsymbol{r_{x_\Delta}}$ sensors are given by 
\begin{align*}
    & C_{ACC} = \begin{bmatrix}
        0 \\
        1
    \end{bmatrix},
    C_o = \begin{bmatrix}
        1 \\ 
        0
    \end{bmatrix},
    \boldsymbol{w_{ACC}} = \begin{bmatrix}
        2.5 \cdot 10^{-9} & 5.0 \cdot 10^{-7}\\
        5.0 \cdot 10^{-7} & 1 \cdot 10^{-4}
    \end{bmatrix}, \\
    & \boldsymbol{w_o} = 10 \cdot \boldsymbol{w_{ACC}},
    \indent \boldsymbol{r_{x_{\Delta}}} = \boldsymbol{r_{v}} = 0.1.
\end{align*} 
To control the ACC vehicle acceleration, we use the Intelligent Driver Model (IDM) \cite{treiberCongestedTrafficStates2000}. The IDM attempts to keep a speed-dependent distance to the preceding vehicle given by $x_{ss}(d_0, v, T) = d_0 + vT$ where $d_0$ is a constant, $v$ is the speed of the ACC vehicle and $T$ is a parameter known as the time headway. We choose $d_0 = 2.7m$ and $T = 2s$.  

The vehicles all start with the same initial speed of $30 m/s$ and are placed at intervals of $x_{ss}(\cdot) + 20m$. The lead vehicle in the ACC vehicle lane has a predefined control schedule where it keeps a constant speed for the first $20$ seconds as the ACC vehicle catches up whereafter it brakes with an acceleration of $a_{min} = -5 m/s^2$ for $5$ seconds and then accelerates to the original speed over $10$ seconds at $a_{max} = 2.5 m/s^2$ which corresponds to the minimum and maximum acceleration of the ACC vehicle respectively. Thus, the ACC vehicle will experience a variety of more and less safety-critical scenarios.

To determine the update error ($e(\cdot)$ in Eq. \eqref{eq:static-ett}), we use the innovation-based update error from Eq. \ref{eq:innovation}. Since the prediction model of the preceding vehicle at the ACC vehicle lacks the control input information, the model is inherently more inaccurate and will lead to significantly more transmitted measurements compared to the sensor measuring the speed of the ACC vehicle. To add some importance to the speed sensor, we simulate an overestimation of the wind resistance leading to greater acceleration than expected. 

For the sake of simplicity, we assume that the measurements, if they satisfy the event-triggering condition, are available instantly at the remote state estimator. Additionally, the information necessary to evaluate the event-triggering condition is assumed to always be available at the smart sensors without the need for additional communication. For single-signal properties, the latter assumption is valid, but it may not be the case for multiple-signal properties. We discuss this further in Section \ref{sec:discussion}.

In Section \ref{sec:inequality-stl}, we compare the TT, CETT, $\rhoett{\cdot}$ and $\rhoettwc{\cdot}$ approaches for an inequality property. Later in Section \ref{sec:temporal-propositional-properties}, we compare the TT, CETT and $\rhoett{\cdot}$ policies for a more complex propositional property involving a multilane scenario.

\subsection{Inequality Property for Single Lane Vehicle Convoy} \label{sec:inequality-stl}

As perfect knowledge of the preceding vehicle is not available, the controller will not be able to keep the desired distance $x_{ss}(\cdot)$ when the lead vehicle brakes. As a result, we construct our inequality property: $\varphi_a = x_{\Delta} > x_{ss}(d_\varphi, v, T)$ where $x_{\Delta} = x_p - x$, $x_p$ and $x$ are the position of the preceding and ACC vehicle, respectively, and $0 \leq d_\varphi < d_0$. The reason for choosing $d_\varphi < d_0$ is that the IDM will not be able to keep the desired distance to the lead vehicle as the ACC vehicle has imperfect knowledge of the speed and acceleration of the lead vehicle. Thus the target distance must be larger than the safe distance. We choose $d_\varphi = 0m$. This gives us the robustness: 
$$\robustness{\varphi}{t_k} = x_{\Delta, t_k} - x_{ss}(0, v_{t_k}, T).$$ 

Initially, we consider the feasibility problem in Eq. \eqref{eq:ett-feasibility}, where we aim to find a sampling frequency that enables the system to satisfy the property. We test two sampling intervals $T_s$ of 0.01s and 0.02 by running 20 simulations of the system for each configuration. We then determine the minimum overall robustness $\rho_{min}$ calculated using the true system state produced for each configuration accross simulations. Setting $T_s = 0.01s$ resulted in $\rho_{min} = \textbf{0.36}$ while setting $T_s = 0.02s$ resulted in $\rho_{min} = \textbf{-0.60}$ (negative robustness), wherefore we choose $T_s = 0.01s$ for all remaining experiments for all ETT regulation policies. 
Note that $T_s$ could potentially be increased to somewhere between $0.01$ and $0.02$, but the number of transmissions cannot be reduced by more than a factor of two given the results of this test.

Next, we consider the CETT, $\rhoett{\cdot}$, and $\rhoettwc{}$ policies for the inequality property $\varphi_a$.
We choose $\propsignals{\varphi_a} = \lbrace v, x_{\Delta}\rbrace$ and test a variety of different combinations of $\epsilonparam{y_i}{\varphi_a}$, $\lambdaparam{y_i}{\varphi_a}$ and $\epsilonparam{\rho}{\varphi_a}$ parameters and constant ETTs using a grid-search technique. We run 20 simulations for each set of parameters and the minimum robustness over all simulations is recorded in Table \ref{tab:results}. Table \ref{tab:results} shows that the ETT policies perform significantly better than the TT policy but also achieve a smaller $\rho_{min}$ for the corresponding configuration. The $\rho$ETT policy triggers approximately \textbf{41.8\%} fewer events compared to the CETT policy while the $\rhoettwc{\cdot}$ policy triggers \textbf{28.4\%} fewer events than the CETT policy.

As the results in Table \ref{tab:results} are found using a brute-force parameter search, they are not guaranteed to be the global minimum. However, the difference in the number of triggered events between the TT, CETT and $\rho ETT$ is large enough to demonstrate a significant advantage of the $\rho$ETT policy over the TT and CETT. Additionally, the ETT policies seem to lead to a smaller $\rho_{min}$ value compared to the TT policy. In a real-world scenario, we may not be interested in the robustness being this close to zero, and we can instead revise the optimization problem in Eq. \eqref{eq:general-optimization} to ensure that $\robustness{\varphi}{t_k} > \eta$ where $\eta > 0$. 

\paragraph*{Parameter exploration}
For the $\rho$ETT regulation mechanism, we provide a contour plot depicting parameter combinations that result in a positive minimum robustness value along with the resulting number of triggered events in Fig. \ref{fig:inequality-property-epsilon-tradeoff}. Fig. \ref{fig:inequality-property-epsilon-tradeoff} shows that generally larger $\epsilonparam{y_i}{\varphi}$ values result in more triggered events regardless of the signal as we would expect. However, increasing the $\epsilonparam{v}{\varphi_a}$ value leads to a smaller increase in the number of triggered events compared to increasing $\epsilonparam{{x_\Delta}}{\varphi_a}$ by the same amount. The figure also shows that when either of the $\epsilonparam{y_i}{\varphi}$ parameters are below a certain threshold, larger values of the other $\epsilonparam{y_i}{\varphi}$ parameter will never result in positive robustness. Additionally, we see that a tradeoff is present in the lower left corner of the feasible set of solutions. Smaller values of either $\epsilonparam{y_i}{\varphi}$ parameter, require larger values for the other parameter to ensure satisfaction. Fig. \ref{fig:inequality-property-epsilon-tradeoff} also shows that the assumption in Eq. \eqref{eq:lower-ett-still-implies-satisfaction} is always true for the $\epsilonparam{x_\Delta}{\varphi_a}$ signal but not necessarily for the $\epsilonparam{v}{\varphi_a}$ parameter. For certain parameter configurations, some larger values of the $\epsilonparam{v}{\varphi_a}$ (resulting in smaller ETTs) do not enable the controller to satisfy the property, while other smaller values of $\epsilonparam{v}{\varphi_a}$ do. This is likely due to the introduced random noise and the relatively small number of simulations.
Due to space constraints, we do not provide the plot for the CETT policy, but the experiments conducted for the CETT policy show a similar tradeoff.

\begin{figure}[h]
    \centering
    \includegraphics[width=0.45\textwidth]{./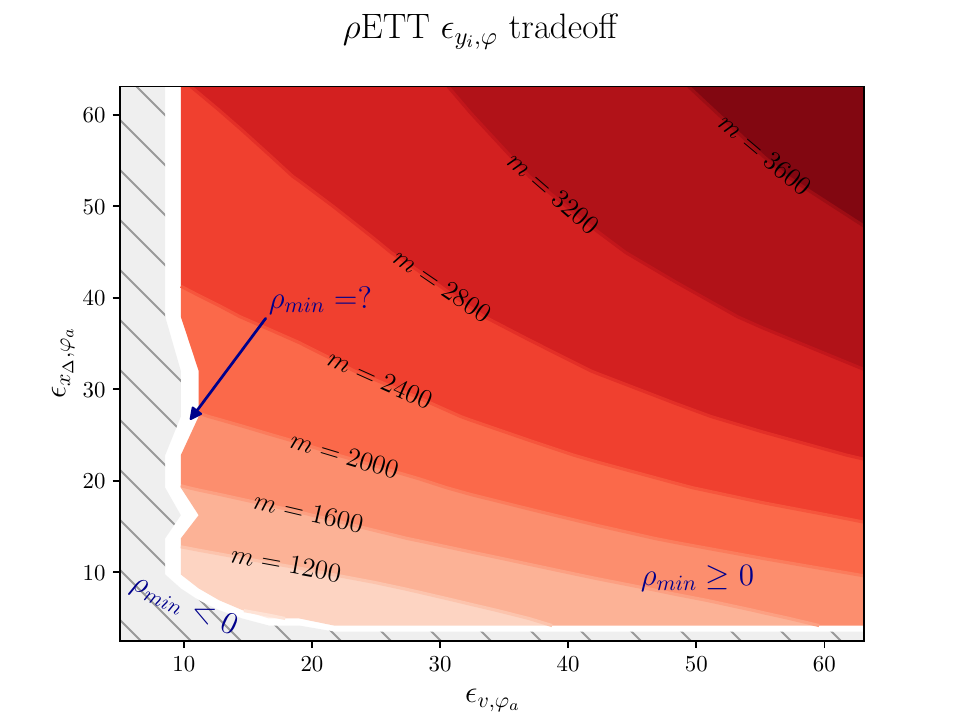}
    \caption{A contour plot of $\epsilonparam{v}{\varphi_a}$ and $\epsilonparam{x_\Delta}{\varphi_a}$ parameter combinations where the red area indicates a feasible configuration and the grey hatched area indicates an infeasible configuration. The white area indicates untested parameter configurations. The shade of red indicates the average number of transmitted measurements.}
    \label{fig:inequality-property-epsilon-tradeoff}
\end{figure}

In Fig. \ref{fig:rho-m-tradeoff}, we provide a plot that explores the tradeoff between the achieved minimum robustness and the number of transmitted measurements for all transmission and ETT regulation strategies. The results shown in Fig. \ref{fig:rho-m-tradeoff} are the results corresponding to the Pareto front (i.e. the set of parameters that achieve a given $\rho_{min}$ value with the fewest number of transmitted measurements). Fig. \ref{fig:rho-m-tradeoff} shows that more transmitted measurements generally lead to larger $\rho_{min}$ values for the Pareto optimal parameter configurations. However, the number of packets increases significantly as we achieve a larger $\rho_{min}$ value resulting in a small performance gain per additional transmitted packet. Generally, the $\rho$ETT achieves the best tradeoff out of all ETT regulation mechanisms while the $\rhoettwc{\cdot}$ policy produces slightly worse results in terms of the number of transmitted packets. This is to be expected since the ETT is regulated based on a more conservative state estimate. The percentage-wise reduction in the number of triggered events is somewhat consistent around \textbf{40\%} for the $\rho$ETT policy compared to CETT which also applies to the best found configuration in Table \ref{tab:results}. 
\begin{figure}[h]
    \centering
    \includegraphics[width=0.48\textwidth]{./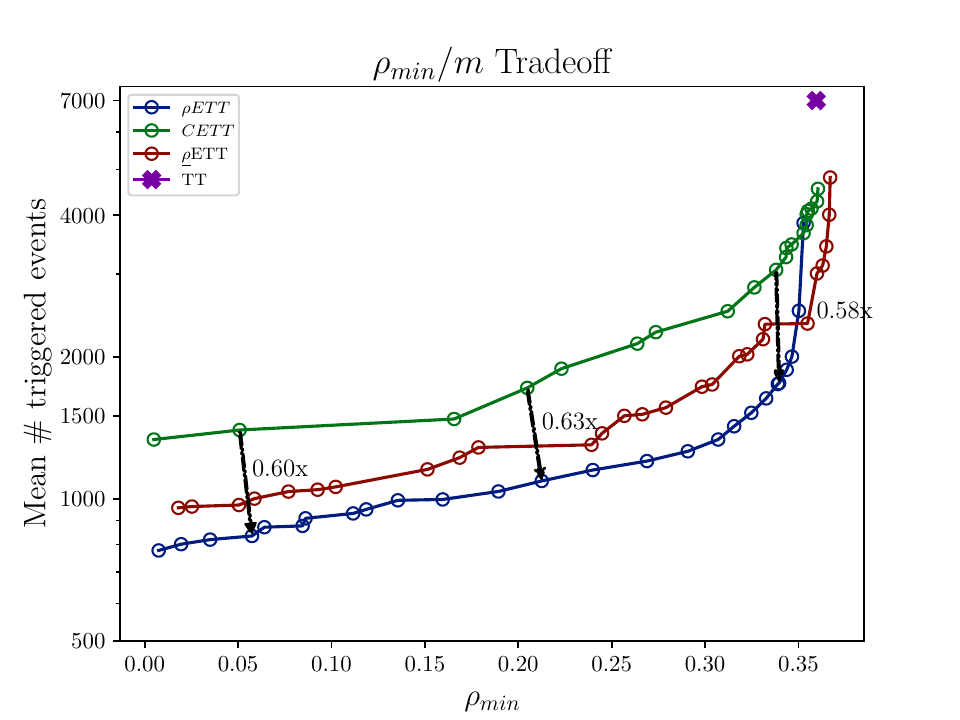}
    \caption{A plot of non-dominated parameter configurations for the CETT and $\rho$ETT ETT regulation strategies, where $m$ is the average number of transmitted packets and $\rho_{min}$ is the minimum robustness evaluated on the true system state over all simulations for a given configuration. Several black arrows indicate what fraction of packets are needed for the $\rho$ETT method to achieve similar minimum robustness values compared to the CETT approach.}
    \label{fig:rho-m-tradeoff}
    \vspace*{-10pt}
\end{figure}
\begin{table}[]
    \centering
    \begin{tabular}{|l|c||lc|}
        \hline
        \multirow{2}{*}{\textbf{Strategy}} & \multirow{2}{*}{\textbf{Parameters}} & \multicolumn{2}{c|}{$\boldsymbol{\varphi_a}$ (Single lane)}  \\ \cline{3-4} 
        & & \multicolumn{1}{c|}{$\rho_{min}$} & $m \pm \sigma(m)$  \\ \hline
        TT & $T_s = 0.01s$ & \multicolumn{1}{l|}{0.36} & 7000 \\ \hline
        CETT & \makecell{$\delta_{v, \varphi_a} = 0.16$ \\ $\delta_{x_\Delta, \varphi_a} = 0.50$} & \multicolumn{1}{l|}{0.005} & $1336 \pm 27$ \\ \hline
        $\rho$ETT & \makecell{$\epsilonparam{v}{\varphi_a} = 16.64$ \\ $\epsilon_{x_{\Delta}} = 4.95$}& 
            \multicolumn{1}{l|}{0.007} &  $777 \pm 14$  \\ \hline
        $\rhoettwc{\cdot}$ & \makecell{$\epsilonparam{v}{\varphi_a} = 13.38$ \\ $\epsilon_{x_{\Delta}} = 3.95$} & \multicolumn{1}{l|}{0.018} & $957 \pm 14$ \\ 

        \hline \hline

        \multirow{2}{*}{} & \multirow{2}{*}{} & \multicolumn{2}{c|}{$\boldsymbol{\varphi_b}$ (Multilane critical)}  \\ \cline{3-4}   
        & & \multicolumn{1}{c|}{$\rho_{min}$} & $m \pm \sigma(m)$  \\ \hline
        TT & $T_s = 0.01s$ & \multicolumn{1}{l|}{3.60}& $14000$ \\ \hline
        CETT & \makecell{$\delta_{v} = 0.16$ \\ $\delta_{x_\Delta} = 0.50$ \\ $\delta_{x^{l_o}_{\Delta,p}} = 0.50$ \\ $\delta_{x^{l_o}_{\Delta,f}} = 2.00 $} & \multicolumn{1}{l|}{3.59} & $1527 \pm 23$ \\ \hline
        $\rho$ETT & \makecell{$\epsilonparam{v}{\varphi_{b_1}} = \epsilonparam{v}{\varphi_{b_3}} = 16.64$ \\ $\epsilonparam{x_\Delta}{\varphi_{b_1}} = 4.95$ \\ $\epsilonparam{x^{l_o}_{\Delta, p}}{\varphi_{b_3}} = 4.95$ \\ $\epsilonparam{x^{l_o}_{\Delta, f}}{\varphi_{b_2}} = 4.03$} & \multicolumn{1}{l|}{3.21} & $166 \pm 7$ \\ \hline
        $\rho$ETT (No $\lor$) & \makecell{$\epsilonparam{v}{\varphi_{b_1}} = \epsilonparam{v}{\varphi_{b_3}} = 16.64$ \\ $\epsilonparam{x_\Delta}{\varphi_{b_1}} = 4.95$ \\ $\epsilonparam{x^{l_o}_{\Delta,p}}{\varphi_{b_3}} = 4.95$ \\ $\epsilonparam{x^{l_o}_{\Delta,f}}{\varphi_{b_2}} = 2.51$} & \multicolumn{1}{l|}{3.41}& $4846 \pm 11$  \\ 
        
        \hline \hline

        \multirow{2}{*}{} & \multirow{2}{*}{} & \multicolumn{2}{c|}{$\boldsymbol{\varphi_b}$ (Multilane non-crit.)}  \\ \cline{3-4}   
        & & \multicolumn{1}{c|}{$\rho_{min}$} & $m \pm \sigma(m)$  \\ \hline
        TT & $-\parallel-$ & \multicolumn{1}{l|}{20.25}& $14000$ \\ \hline
        CETT & $-\parallel-$ & \multicolumn{1}{l|}{20.25} & $1502 \pm 24$ \\ \hline
        $\rho$ETT & $-\parallel-$ & \multicolumn{1}{l|}{20.25} & $55 \pm 2$ \\ \hline
        $\rho$ETT (No $\lor$) & $-\parallel-$ & \multicolumn{1}{l|}{20.25}& $1927 \pm 6$  \\ \hline   
    \end{tabular}
    \caption{An overview of the number of triggered events, minimum robustness and parameters for the found optimal configurations of the TT, CETT and $\rho ETT$ ETT policies. The average number of transmitted measurements and one standard deviation over all simulations for a specific configuration is denoted by $m \pm \sigma(m)$ and $\rho_{min}$ is the minimum robustness over all simulations calculated on the true state. The presence of $-\parallel-$ indicates that the same $\epsilonparam{y_i}{\varphi}$ parameters are used for the non-critical multilane scenario as for the critical scenario.} 
    \label{tab:results} 
    \vspace*{-10pt}
\end{table}

We now further investigate the $\rhoettwc{\cdot}$ policy. Fig. \ref{fig:lambda-tradeoff} depicts results for different values of $\epsilonparam{\rho}{\varphi}$ and $\lambdaparam{y_i}{\varphi}$ combinations. Fig. \ref{fig:lambda-tradeoff} shows that even though the overall uncertainty is fixed relative to the robustness, different $\lambdaparam{y_i}{\varphi}$ combinations produce both different average minimum robustness and a different number of triggered events. We see that for the most part, especially for lower $\epsilon_{\rho, \varphi}$ values, there is a large difference in the number of transmitted packets and some difference in resulting robustness for different $\lambdaparam{y_i}{\varphi}$ parameter combinations. Furthermore, Fig. \ref{fig:lambda-tradeoff} shows the more transmitted packets do not necessarily lead to larger average $\rho_{min}$ values if the parameters are configured sub-optimally. Despite the irregularities in Fig. \ref*{fig:lambda-tradeoff}, there could be some potential to apply a more intelligent numerical optimization algorithm to find locally optimal parameter configurations as there seems to be some structure to the parameter space and the cost in terms of $\rho_{min}$ and the number of triggered events.

\begin{figure}[h]
    \centering
    \includegraphics[width=0.45\textwidth]{./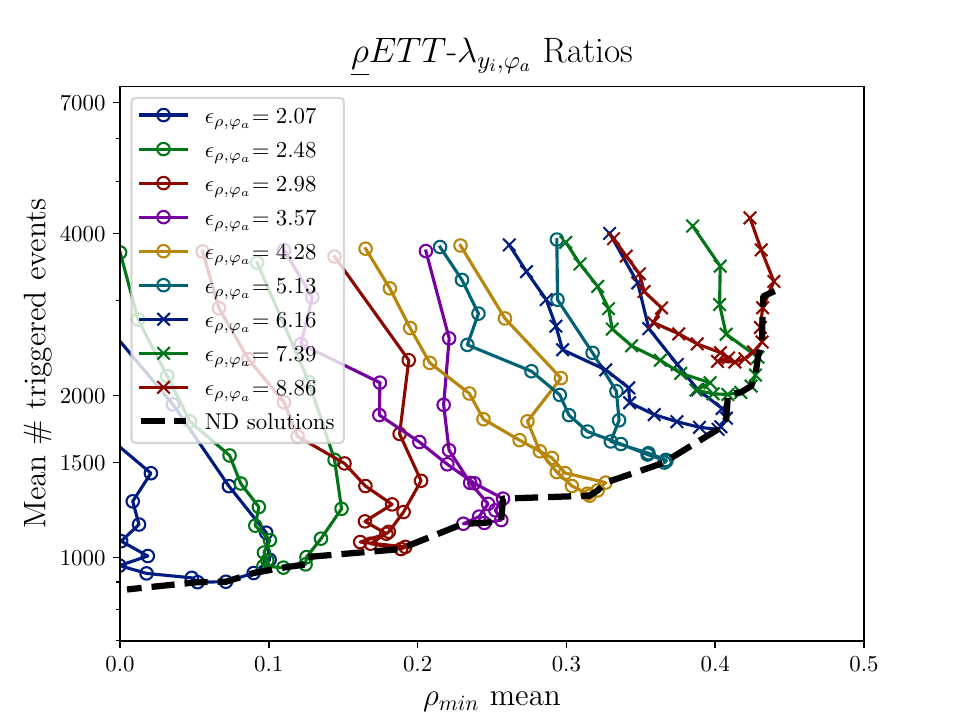}
    \caption{A plot of the resulting mean $\rho_{min}$ and corresponding mean number of transmitted packest $m$ for different values of $\epsilonparam{\rho}{\varphi_a}$ and $\lambdaparam{y_i}{\varphi_a}$ parameters for $\rhoettwc{\cdot}$. The black ND solutions series corresponds to the non-dominated solutions for the average $\rho_{min}$ values as opposed to Fig. \ref{fig:rho-m-tradeoff} where the overall $\rho_{min}$ was used to find the non-dominated solutions.}
    \label{fig:lambda-tradeoff}
    \vspace*{-10pt}
\end{figure}

\subsection{Propositional Property for Multi-lane Vehicle Convoy}\label{sec:temporal-propositional-properties}
To evaluate the ETT regulation mechanisms proposed in Section \ref{sec:ett-regulation-propositional}, we extend the scenario from Section \ref{sec:inequality-stl} to a two-lane ACC scenario. We add a fast lane with additional faster vehicles and enable the ACC vehicle to measure the distance to the two closest vehicles ($x^{l_o}_{\Delta, p}, x^{l_o}_{\Delta, f}$ in preceding and following the ACC vehicle respectively) using two additional sensors. 
We add additional control behavior to allow the ACC vehicle to change to the fast lane. The vehicles in the fast lane travel at a constant speed and are placed sufficiently far apart to allow for safe overtaking when the lead vehicle brakes.

We construct the PL property \break $\varphi_{b} = ((\varphi_{b_1}) \lor (\varphi_{b_2} \land \varphi_{b_3}))$ where $\varphi_{b_1} = x_{\Delta} > x_{ss}(d_{\varphi}, v, T)$, $\varphi_{b_2} = x^{l_o}_{\Delta, f} > x_{ss}(d_{\varphi}, v^{l_o}_{f}, T)$, $\varphi_{b_3} = x^{l_o}_{\Delta, p} > x_{ss}(d_{\varphi}, v, T)$, and the $l_o$ superscript denotes vehicles in the other lane relative to the ACC vehicle and the $_p$ and $_f$ subscripts denote the vehicles preceding and following the ACC vehicle. We use the following signal-to-property assignments for the inequality properties in $\varphi_b$:  $\propsignals{\varphi_{b_1}} = \lbrace v, x_{\Delta} \rbrace$,  $\propsignals{\varphi_{b_2}} = \lbrace x^{l_o}_{\Delta, f} \rbrace$,  $\propsignals{\varphi_{b_3}} = \lbrace x^{l_o}_{\Delta, p}, v \rbrace$. We note that $\varphi_{b_2}$ depends on $v^{l_o}_{f}$ which is not directly measurable and must be estimated. Thus the $\rhoettwc{\cdot}$ policy cannot be used for $\varphi_b$.
Additionally, we construct an identical property except we replace the $\lor$ operator in $\varphi_b$ with an $\land$ for monitoring. This is to demonstrate what would happen if the ETT relaxation mechanism for the $\lor$ operator in Definition \ref{def:prop-operators-ett-regulation} was not implemented and we treated the $\lor$ operators as an $\land$. We consider two scenarios: 1) a critical scenario where there is little but enough room to overtake at the right time, and 2) a non-critical scenario where there is plenty of room to overtake.

\paragraph*{Parameter identification}
Based on the results in the single-lane scenario, we know that $T_s = 0.01 s$ is necessary for the TT policy to ensure positive robustness in the most critical scenario. In the case of the multilane scenario, the most critical case is identical to that of the one-lane scenario when there is no room to safely switch to the fast lane. Thus, we conclude that the same sampling frequency is necessary for the multilane scenario. 

Next, we consider the CETT and $\rho$ETT transmission strategies. If there is not enough room to overtake in the fast lane, the ETTs should still ensure positive robustness in the current lane. Thus, the ETTs for the sensors that monitor the speed of the ACC vehicle and the distance between the ACC vehicle and the preceding vehicles in either lane, should use identical parameters to those found for the $\varphi_a$ safety property in the previous section. However, the property $\varphi_{b_2}$ in $\varphi_b$ is a different case, as the estimated speed of the following vehicle is used to determine the safe distance rather than the speed of the ACC vehicle. To determine safe ETT parameters for the sensor measuring the distance to the following vehicle in the fast lane ($x^{l_o}_{\Delta, f}$), we initially consider the critical scenario to determine safe parameters. We test a range of different $\rho$ETT policy parameters for the $x^{l_o}_{\Delta, f}$ signal and run 20 simulations for each configuration.

In Table \ref{tab:results}, we note the minimum robustness and corresponding number of sent measurements in the critical scenario ($\varphi_b$) for the best tested configuration for the CETT, $\rho$ETT, TT and the $\rho$ETT regulation policy with the $\lor$ replaced by an $\land$ operator. The robustness is measured as if the $\lor$ operator were present. The results show that the $\rho$ETT policy with the $\lor$ operator triggers \textbf{89.1\%} and \textbf{96.6\%} fewer events compared to the CETT and $\rho$ETT without the $\lor$ operator, respectively. Another notable result is that the $\rho$ETT without the $\lor$ operator transmits \textbf{3.17} times more packets than the CETT approach. Upon investigation, we find that many of the transmitted measurements can be attributed to the vehicles in the fast lane as the robustness of the properties $\varphi_{b_2}$ and $\varphi_{b_3}$ is negative for a significant duration of the simulation. This results in the corresponding ETTs being zero for the duration of the negative robustness resulting in many transmitted packets. However, the CETT does not suffer from this problem as the ETTs never reach zero. Based on the results in Fig. \ref{fig:rho-m-tradeoff}, we see that once we reach a certain minimum robustness, many additional transmitted measurements will not significantly increase $\rho_{min}$. Along with the fact that measurement noise is typically present, this may suggest that some minimum ETT $ > 0$ may be a good idea to decrease the number of unnecessary transmitted measurements in the case of low robustness.

Finally, we consider a non-critical multilane scenario where there is plenty of space to overtake the braking vehicle, resulting in an overall larger robustness. The parameters found for the critical case are reused and the results are noted again in Table \ref{tab:results}. The results for the non-critical multilane scenario show similar results as the critical scenario where the $\rho$ETT policy transmits \textbf{96.3 \%} and \textbf{97.1\%} fewer events compared to the CETT and $\rho$ETT policy without the $\lor$ operator respectively. The $\rhoett{\cdot}$ policy triggers significantly fewer events in the non-critical scenario compared to the critical scenario while the number of triggered events for the CETT policy stays almost the same. This result highlights the ability of the $\rhoett{\cdot}$ mechanism to adapt based on critical and non-critical situations.

\section{Discussion}\label{sec:discussion}
We now discuss our proposed ETT regulation method in light of the obtained numerical results as well as differences and similarities with other approaches in SoTA.

\paragraph*{Parameter Tuning and Formal Guarantees} Section~\ref{sec:case-study} also demonstrates the parameter tuning procedure used to select constant ETTs or ETT regulation parameters. In the simulated scenarios, the parameters chosen ensure safety. However, these parameters do not provide formal guarantees for a class of systems or scenarios, i.e., they do not provide a guarantee for potentially similar scenarios.  
For example, if we consider the multilane scenario and the sensor measuring $x^{l_o}_{\Delta, f}$, the parameters resulting in triggering the smallest number of events were sufficient to ensure a positive value of $\rho_{min}$. This is likely due to added acceleration meaning that the estimated robustness of $\varphi_{b_3}$ is typically larger than the estimated robustness. However, if the drift was in the opposite direction, then the large ETT (or small $\epsilonparam{y_i}{\varphi}$ value) may not be sufficient to satisfy the requirement. As a result, when using such a numerical parameter tuning method, one has to consider what situations have to be tested to ensure that the chosen parameters ensure safety in all situations that the system will experience, which is beyond the scope of this paper. 

A potential way to combat the safety issue described above is to make the control behavior aware of the uncertainty introduced by the event-triggering threshold. For example, consider the scenario where the ACC vehicle is far away from the safe distance to a preceding vehicle in the slow lane but for some reason desires to change to the fast lane. The property $\varphi_b$ will then provide high ETTs to the sensors measuring the distance to the vehicles in the fast lane. If the ACC vehicle for some reason desires to switch lanes, it will have an inaccurate estimate of the vehicles in the other lane, which may not guarantee a safe overtaking. Thus, the uncertainty information should be taken into account in the control behavior to ensure safe operation. The reason this is not an issue in the case study, is that the overtaking behavior is designed to only occur when the ACC vehicle is close to the lead vehicle in the slow lane. In this scenario, the low distance to the lead vehicle will not contribute to altering the ETTs related to the distance of the other vehicles in the fast lane through the $\lor$ operator.

To provide formal guarantees, other comparable methods, such as the co-design of a controller and event-triggering mechanism that ensures the stability of a networked control system \cite{zhangSurveyRecentAdvances2016, zhangNetworkedControlSystems2020a} have been proposed.
Such formalisms rely on the ability to formalize a stability condition, which our method does not require as the safety condition is implicitly specified in the optimization problem. Additionally, our method allows us to simultaneously consider performance properties. Allthough this was not shown in our case study, the overall procedure and concept are the same. 

\paragraph*{Similarities and Differences with SoTA} Using propositional properties to describe system safety/performance requirements has similarities with other ways of describing systems, such as Fuzzy modeling and control \cite{yagerEssentialsFuzzyModeling1994}. Fuzzy modeling and control allows the specification of IF-THEN statements to describe control and system behavior which is conceptually similar to how the $\rightarrow$ operator can be used. Several works have been published for event-triggered control of fuzzy systems \cite{panEventtriggeredFuzzyControl2017, suEventtriggeredFuzzyControl2018, panSecurityBasedFuzzyControl2022} but they typically only consider behaviors of the form $(\varphi_1 \land \varphi_2 ...) \rightarrow \text{behavior}$. Additionally, many of the formal methods that guarantee safety, do not explore the tradeoff between varying the ETTs for different sensors and its impact on the overall number of transmitted packets.  

\paragraph*{Connection to Signal Temporal Logic} As mentioned in Section~\ref{sec:stl}, PL is a subset of STL, which extends propositional logic with temporal operators. For example, the \textit{eventually} ($\diamond_I$) and \textit{always} ($\square_I$) operators allow the system to verify that something will eventually become true or is always true within some interval $I$, respectively. This enables STL to specify properties over constrained time intervals rather than simply stating that something always has to be true. Several results on controller synthesis to ensure satisfaction of STL properties or other similar formalisms have been published (some useful references include \cite{beltaFormalMethodsControl2019, ramanReactiveSynthesisSignal2015}) but only very few on event-triggered control \cite{lindemannEventtriggeredFeedbackControl2018, longDynamicEventTriggered2024, maityEventTriggeredControllerSynthesis2018} exist. Furthermore, most of the event-triggered control publications only consider a subset of STL excluding the $\lor$ operator. To our knowledge, the publication that is closest to our work is \cite{maityEventTriggeredControllerSynthesis2018}, which develops an event-triggering strategy that ensures that the event-triggered state trajectory stays within an $\epsilon$-tube of the ideal trajectory given by the continuous state feedback controller for Temporal Logic over Reals (RTL) \cite{reynoldsComplexityTemporalLogic2010} formulas. These correspond roughly to unbounded PL formulas but without the definition of numerical satisfaction semantics. Additionally, the event-triggering condition in \cite{maityEventTriggeredControllerSynthesis2018} is evaluated based on the full state and only considers the option of transmitting the full state, rather than measurements of individual signals. 

Another similarity with the SoTA is that the optimization problems presented in this paper are always constrained to ensure that the robustness of the property is positive for all time steps. This corresponds to an implicit unbounded STL $\square$ operator. The subset of STL considered in \cite{lindemannEventtriggeredFeedbackControl2018,longDynamicEventTriggered2024}, considers bounded $\square$ operators but does not consider disjunctions. Extending our approach to include the STL temporal operators and refining the currently proposed ETT regulation mechanisms will be a core part of our future work.

\paragraph*{Assumptions and Impact in Real Systems} The case study makes several other assumptions that need to be handled in a real-world setting. For example, we assumed that the necessary information for evaluating the event-triggering condition is available at the sensor without the need for communication. In a real-world scenario, this assumption does not hold in every case and the required communication could diminish the savings introduced by the ETT. The SOD update error may be a better fit in a real-world scenario as it does not require additional information to predict the value of the measurement. We also note that the ETT refinement as a result of the $\lor$ operator will likely also require additional communication since properties (and thus signals) can affect the ETT of other signals.
However, the fact that there is a large gap between the $\rho$ETT and the constant ETT (namely, the $\rho$ETT approach triggers between \textbf{41.8 - 96.3 \%} less events compared to the CETT), provides ample room to address these challenges while maintaining significant gains. 
Despite many of these potential challenges, the expressiveness of PL has the potential to save communication resources and optimize performance in a variety of scenarios by enabling fine-grained control of the ETT through our proposed method.

\section{Conclusions and Future Work}\label{sec:conclusion}

In this paper, we have presented a novel ETT regulation mechanism that infuses ETT regulation with propositional logic. Through the expressiveness of PL, our approach enables accurately defining circumstances in a wide array of problems under which the ETTs are either increased or decreased depending on the safety and/or performance requirements of the system as well as how well these requirements are satisfied at runtime. We explored the intuition and possibilities of our method using a safety-related case study. The case study showed a large potential for reduction in the number of triggered events, e.g. between \textbf{41.8 - 96.3 \%} fewer events while maintaining similar minimum safety. Assessing the usability of our proposed method in a wide range of systems will be a core part of our future work as this will help us identify potential issues and beneficial improvements.
Future work will focus on expanding our approach to include temporal operators from Signal Temporal Logic (STL) in order to capture more complex properties that have specific timing considerations as well as methods for parameter tuning of the system.

\appendix
\section{Interval arithmetic properties}
\subsection{Proof of Lemma \ref{lem:interval-reversal}}\label{apdx:interval-reversal-proof}
\begin{proof}
    The scalar $\alpha$ can be represented as the degenerate interval $A = [\alpha, \alpha]$. Multiplying $X$ and $A$, according to \cref{eq:interval-multiplication}, we have: $A \cdot X = [\min S, \max S] \text{ where } S = \lbrace \underline{AX}, \underline{A}\overline{X}, \overline{A}\underline{X}, \overline{AX}\rbrace$. Since $\overline{A} = \underline{A}$, then $\underline{AX} = \overline{A}\underline{X}$ and $\overline{AX} = \underline{A}\overline{X}$. Since $\alpha < 0$ and $\overline{X} > \underline{X}$, consequently $\alpha \overline{X} < \alpha \underline{X}$, resulting in $\min S = \alpha \overline{X}$ and $\max S = \alpha \underline{X}$ and the interval $A \cdot X = [\alpha \overline{X}, \alpha \underline{X}]$.
\end{proof}
\subsection{Proof of Lemma \ref{lem:ett-interval-width}}\label{apdx:ett-interval-width-proof}
\begin{proof}
    We show the proof for the case $\alpha_X > 0 \text{ and } \alpha_Y < 0$. The proof for the remaining cases ($\alpha_X > 0 \text{ and } \alpha_Y > 0, \alpha_X < 0 \text{ and } \alpha_Y > 0, \alpha_X < 0 \text{ and } \alpha_Y < 0$) can be obtained by following the same approach.
    Using the result of Lemma \ref{lem:interval-reversal}:
    \small
    \begin{align*}
        & w(\alpha_XX(x, \delta_x) + \alpha_YY(y, \delta_y)) \\
        & = w([\alpha_X(x - \delta_x), \alpha_X(x + \delta_x)] + [-|\alpha_Y|(y + \delta_y), -|\alpha_Y|(y - \delta_y)]) \\
        & = w([\alpha_X(x - \delta_x) + -|\alpha_Y|(y + \delta_y), \alpha_X(x + \delta_x) + -|\alpha_Y|(y - \delta_y)]) \\
        & = \alpha_X(x + \delta_x) + -|\alpha_Y|(y - \delta_y) - (\alpha_X(x - \delta_x) + -|\alpha_Y|(y + \delta_y)) \\
        & = 2 \alpha_X\delta_x + 2 |\alpha_Y|\delta_y = 2 |\alpha_X|\delta_x + 2|\alpha_Y|\delta_y.
    \end{align*}
    \normalsize
\end{proof}

\bibliographystyle{IEEEtran}
\bibliography{lib.bib}

\end{document}